\documentclass[journal]{IEEEtran}

\usepackage{lipsum} 

\usepackage{cite}
\usepackage{amsmath,amssymb,amsfonts,amsthm}
\usepackage{algorithmic}
\usepackage{graphicx}
\usepackage{textcomp}
\usepackage[table]{xcolor}
\usepackage{flushend}
\usepackage{orcidlink}
\usepackage{multirow}
\theoremstyle{plain}
\newtheorem{thm}{Theorem}
\newtheorem{lem}[thm]{Lemma}
\newtheorem{dfn}[thm]{Definition}
\newtheorem{prop}[thm]{Proposition}
\newtheorem{example}[thm]{Example}
\newtheorem{rem}[thm]{Remark}
\usepackage{booktabs}
\usepackage{makecell}
\usepackage{enumitem}
\usepackage{tabularx}
\newcolumntype{C}{>{\centering\arraybackslash}X}
\usepackage[whole]{bxcjkjatype}
\usepackage{comment}

\begin{document}

\title{
LipsAM: Lipschitz-continuous Neural Networks for Convergent Plug-and-Play Audio Signal Recovery
}

\author{Kazuki~Matsumoto,~\IEEEmembership{Student Member,~IEEE,}
        Ren~Uchida,
        Natsuki~Yoshino,
        and~Kohei~Yatabe,~\IEEEmembership{Member,~IEEE}
\thanks{This work was supported by JST FOREST Program (JPMJFR2330).}
\thanks{Kazuki Matsumoto, Ren Uchida, Natsuki Yoshino, and Kohei Yatabe are with Tokyo University of Agriculture
and Technology, Tokyo 184-8588, Japan (e-mail: \url{k_m_w_314@akane.waseda.jp}; \url{renuchida.research@gmail.com}; \url{n-yoshino@go.tuat.ac.jp}; \url{yatabe@go.tuat.ac.jp})}
}


\maketitle

\begin{abstract}

The Lipschitz continuity of deep neural networks (DNNs) is essential for establishing theoretical guarantees regarding their behavior. 
From both theoretical and practical perspectives, various methods have been proposed to construct Lipschitz-continuous architectures and control their Lipschitz constants. 
However, several DNN architectures common in audio signal processing fall outside the scope of existing theoretical frameworks, hindering the development of Lipschitz-continuous models in acoustic applications. 
In particular, despite their widespread adoption, DNNs that separately process the magnitude and phase of complex-valued signals cannot be Lipschitz continuous under existing frameworks.
In this paper, to address this limitation, we establish a theoretical foundation for constructing amplitude modifiers (AMs), a class of DNN architectures that operate solely on the magnitude of a complex-valued input, with provable Lipschitz continuity.
Specifically, we derive a necessary and sufficient condition for an AM to be Lipschitz continuous and propose LipsAMs (Lipschitz-continuous AMs) corresponding to common architectures for audio signals, including time-frequency masking.
Furthermore, we develop an efficient framework for evaluating their Lipschitz constants and analytically derive these constants for some of the proposed architectures. 
As an application, we propose CoReM-LipsAM (Controlled Residual Maps via LipsAM) for plug-and-play (PnP) audio signal recovery, integrating a DNN as a data-driven prior within a model-based signal processing algorithm. 
The convergence of the obtained PnP algorithm is structurally guaranteed by the CoReM-LipsAM architecture and empirically validated through speech dereverberation experiments.
\end{abstract}

\begin{IEEEkeywords}
Deep neural network (DNN), Lipschitz constant, plug-and-play (PnP) method, short-time Fourier transform (STFT), time-frequency masking. 
\end{IEEEkeywords}

\IEEEpeerreviewmaketitle

\section{Introduction}
\label{sec:introduction}

\IEEEPARstart{D}{espite} their widespread success across diverse signal processing applications, deep neural networks (DNNs) are often regarded as black-box systems, raising reliability concerns. 
For instance, \textit{adversarial examples}, wherein subtle input perturbations induce unexpected outputs, have been demonstrated in image recognition \cite{szegedyIntriguingPropertiesNeural2014,goodfellowExplainingHarnessingAdversarial2015} and speech recognition \cite{carliniAudioAdversarialExamples2018}, underscoring the sensitivity of DNNs. 
A primary strategy for mitigating this sensitivity is to enforce mathematically verifiable properties on DNN architectures.

Lipschitz continuity is one such property, as it directly bounds output sensitivity with respect to input perturbations.
Specifically, let $\mathbb{F}\in\{\mathbb{R},\mathbb{C}\}$, where $\mathbb{R}$ and $\mathbb{C}$ denote the sets of all real and complex numbers, respectively.
Given $L \geq 0$, a mapping $f:\mathbb{F}^N\to\mathbb{F}^M$ is \textit{$L$-Lipschitz continuous} if
\begin{align}
\forall \mathbf{x},\mathbf{y}\in\mathbb{F}^N,\quad
\|f(\mathbf{x}) - f(\mathbf{y})\|_2
\leq L \|\mathbf{x} - \mathbf{y}\|_2,
\label{eq:lipschitz_continuity}
\end{align}
where $\|\cdot\|_2$ denotes the $\ell_2$-norm.
The smallest $L$ satisfying this inequality is called the \textit{Lipschitz constant} of $f$, denoted by $\operatorname{Lip}(f)$.
When a DNN is Lipschitz continuous, one can establish certified robustness against adversarial examples \cite{cisseParsevalNetworksImproving2017,tsuzukuLipschitzMarginTrainingScalable2018}.
Furthermore, Lipschitz continuity plays a fundamental role in the theoretical analysis of neural ordinary differential equations \cite{chenNeuralOrdinaryDifferential2018} and flow-based generative models \cite{chenResidualFlowsInvertible2019}.

Lipschitz continuity is also central to the convergence analysis of Plug-and-Play (PnP) methods \cite{venkatakrishnanPlugandPlayPriorsModel2013,zhangLearningDeepCNN2017,zhangPlugandPlayImageRestoration2021,tanakaAPPLADEAdjustablePlugandPlay2022,matsumotoDeterminedBSSCombination2024,yangIntegratingPlugandPlayData2023,chanPlugandPlayADMMImage2016}.
PnP integrates a DNN as a data-driven prior into an iterative optimization algorithm for model-based signal processing.
Specifically, a proximal operator of an optimization algorithm is replaced by a denoiser, such as a pre-trained DNN.
Consequently, PnP combines the expressive power of DNNs with an interpretable mathematical model of the physical observation process.
Although originally developed for image processing tasks, including image reconstruction \cite{venkatakrishnanPlugandPlayPriorsModel2013}, deblurring \cite{zhangLearningDeepCNN2017,zhangPlugandPlayImageRestoration2021}, and super-resolution \cite{zhangLearningDeepCNN2017,zhangPlugandPlayImageRestoration2021}, PnP has also proven effective in audio signal processing tasks, such as audio declipping \cite{tanakaAPPLADEAdjustablePlugandPlay2022}, source separation \cite{matsumotoDeterminedBSSCombination2024}, and dereverberation \cite{yangIntegratingPlugandPlayData2023}.
However, integrating an arbitrary DNN into an optimization algorithm complicates theoretical analysis, rendering algorithmic convergence unverified.
To resolve this issue, various studies have imposed the Lipschitz continuity of the employed DNN, together with constraints on its Lipschitz constant, to establish theoretical convergence guarantees \cite{ryuPlugandPlayMethodsProvably2019,pesquetLearningMaximallyMonotone2021,huraultGRADIENTSTEPDENOISER2022,huraultProximalDenoiserConvergent,yukawaMonotoneLipschitzGradientDenoiser2025}.

Motivated by the importance of Lipschitz continuity, the construction of Lipschitz-continuous DNNs has been widely investigated.
These approaches fall into three primary categories: normalization-based, training-loss-based, and structure-based methods.
Normalization-based methods, such as real spectral normalization (RealSN) \cite{ryuPlugandPlayMethodsProvably2019}, constrain convolutional layers to be $1$-Lipschitz continuous by normalizing weight matrices during training.
Training-loss-based methods promote nonexpansiveness (i.e., $\operatorname{Lip}(f)\leq 1$) or related theoretical properties by penalizing the network Jacobian during training \cite{huraultProximalDenoiserConvergent,pesquetLearningMaximallyMonotone2021}.
Structure-based methods construct $1$-Lipschitz layers by employing weight matrix reparameterizations that enforce, e.g., orthogonality\cite{prach1LipschitzLayersCompared2024,meunierDynamicalSystemPerspective2022a,prachAlmostOrthogonalLayersEfficient2022,liPreventingGradientAttenuation2019}.
In parallel with network construction, the evaluation of Lipschitz constants for a given DNN remains an active area of research \cite{fazlyabEfficientAccurateEstimation2019,jordanExactlyComputingLocal2020a}.

However, existing techniques cannot directly guarantee the Lipschitz continuity of DNNs that process complex-valued \textit{spectrograms}.
These time-frequency representations, obtained via the (discrete) short-time Fourier transform (STFT), or the discrete Gabor transform (DGT), are widely used in audio signal processing.
A common strategy for processing them is to modify their magnitude and phase components separately.
This paper focuses on a class of DNN architectures, which we call \textit{amplitude modifiers (AMs)}. 
An AM $\mathcal{D}_{\!\mathcal{A}}:\mathbb{C}^N\to\mathbb{C}^N$ modifies only the magnitude of a (vectorized) complex-valued spectrogram $\mathbf{z}=[z_1,\ldots,z_N]^{\mathsf{T}}\in\mathbb{C}^N$ as follows:
\begin{equation}
\text{AM}:\mathcal{D}_{\!\mathcal{A}}(\mathbf{z})
=
\mathcal{A}(|\mathbf{z}|)\odot\operatorname{sign}(\mathbf{z}),
\label{eq:amplitude_modifier}
\end{equation}
where $\mathcal{A}:\mathbb{R}_{+}^{N}\to\mathbb{R}_{+}^{N}$ is the amplitude-modifying mapping implemented using a DNN, and $\odot$ represents element-wise multiplication.
The amplitude component $|\mathbf{z}|\in\mathbb{R}_{+}^{N}$ and the phase component $\operatorname{sign}(\mathbf{z})\in\mathbb{C}^{N}$ are given element-wise by the absolute value function and the complex-valued sign function:
\begin{align}
(|\mathbf{z}|)_{n} = |z_{n}|,\qquad  (\operatorname{sign}(\mathbf{z}))_{n} = \left\{ \begin{array}{cc}\displaystyle \frac{z_{n}}{|z_{n}|} & (z_n\neq0) \\ 0 & (z_n=0)\end{array}\right..
\end{align}
Despite their widespread use, AMs present a fundamental limitation regarding Lipschitz continuity.
Specifically, the Lipschitz continuity of the overall model $\mathcal{D}_{\!\mathcal{A}}$ is not guaranteed even when the amplitude-modifying part $\mathcal{A}$ is Lipschitz continuous, owing to the discontinuity of the sign function (see Sect.~\ref{sec:pitfall} for details).
This limitation hinders the application of Lipschitz-based theories to conventional AMs.

\begin{figure}
\includegraphics[width=\linewidth]{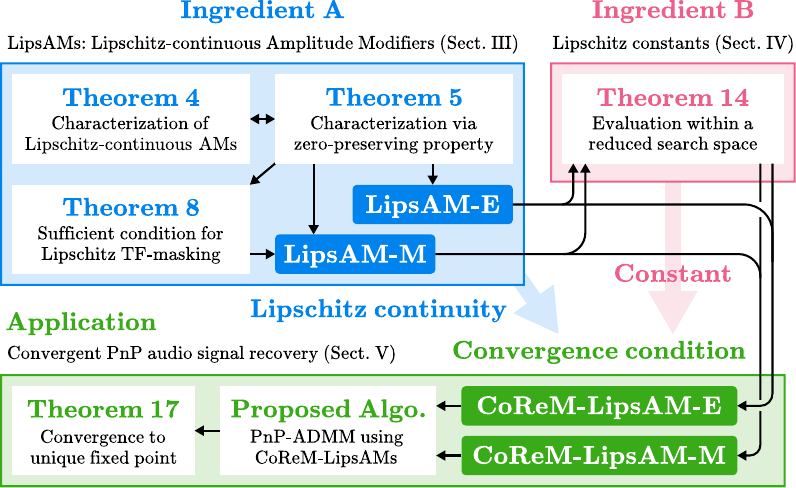}
\caption{
Overview of this paper.
White boxes denote main theoretical results or algorithms, whereas colored boxes represent the proposed DNN architectures, where ``-E'' and ``-M'' indicate direct estimation of the amplitude component and time-frequency (TF) masking, respectively.
Ingredient A establishes LipsAM (Lipschitz-continuous Amplitude Modifiers), and Ingredient B provides a framework for evaluating Lipschitz constants within a reduced search space. 
These components are integrated to construct CoReM-LipsAM (Controlled-Residual Maps via LipsAM), which satisfies the convergence conditions for a PnP (Plug-and-Play) audio signal recovery algorithm.
}
\label{fig:diagram}
\end{figure}

In this paper, to address this limitation, we establish a theoretical foundation for constructing AMs with provable Lipschitz continuity%
\footnote{
This paper extends our preliminary conference publication \cite{matsumotoLIPSAMLIPSCHITZCONTINUOUSAMPLITUDE}. 
The conference version addressed only two estimator-based LipsAM architectures, provided an incomplete proof of Theorem~\ref{thm:lipsam} by omitting the necessity condition, and lacked convergence analysis for the PnP algorithm. 
The present work resolves these theoretical limitations. 
Specifically, we provide a complete formulation of LipsAM architectures, establish both the necessary and sufficient conditions for Theorem~\ref{thm:lipsam}, and prove the convergence of the proposed PnP algorithm. 
Furthermore, this manuscript introduces ten additional formal results (including theorems, propositions, and lemmas) along with complete proofs, thereby establishing a rigorous theoretical foundation for LipsAM.
}.
As illustrated in Fig.~\ref{fig:diagram}, our proposed framework comprises two key components. 
The first component (\textbf{Ingredient~A}, shaded in blue) focuses on constructing Lipschitz-continuous AM-based DNN architectures, termed \textit{LipsAMs (Lipschitz-continuous Amplitude Modifiers)}. 
Because our approach falls into the category of structure-based methods, any DNN designed according to the proposed formulations is inherently guaranteed to be Lipschitz continuous. 
The second component (\textbf{Ingredient~B}, shaded in red) provides an efficient framework for evaluating the Lipschitz constants of LipsAMs. 
Some of these constants are derived analytically, whereas for others, they are characterized via an optimization formulation with a reduced search space. 
Together, these theoretical results lay the foundation for provably convergent PnP algorithms in audio signal processing.

As an application of the established theoretical framework and proposed LipsAM architectures, we introduce \textit{CoReM-LipsAMs (Controlled-Residual Maps via LipsAMs)}. 
In these DNN architectures, the Lipschitz constants of the residual mappings are explicitly controlled by design, thereby satisfying the convergence conditions of an alternating direction method of multipliers (ADMM)--based PnP algorithm (shaded in green in Fig.~\ref{fig:diagram}). 
We evaluated the proposed algorithm on a speech dereverberation task and empirically validated its theoretical convergence guarantees.

The primary contributions of this paper are as follows:
\begin{itemize}
\item We define LipsAMs, a class of DNNs that process the amplitude components of complex-valued inputs, and prove that satisfying the definition of LipsAM is necessary and sufficient for an amplitude-modifying DNN to be Lipschitz continuous (Definition~\ref{def:lipsam}, Theorems~\ref{thm:lipsam} and \ref{prop:zero_preserving}).

\item We derive a sufficient condition for time-frequency masking-based AMs to be Lipschitz continuous (Theorem~\ref{thm:mask_sufficient}) and propose a novel architecture satisfying this condition, extending LipsAMs to masking-based DNNs.

\item We prove that evaluation of the Lipschitz constant for any LipsAM architecture reduces to a low-dimensional optimization problem (Theorem~\ref{thm:worst_case_lipschitz_bound}) and develop an efficient framework for Lipschitz constant evaluation.

\item We propose CoReM-LipsAMs, a class of DNNs that structurally satisfy the convergence condition of the PnP algorithm (Theorem~\ref{thm:contractionTF}), thereby establishing theoretical convergence guarantees for PnP audio signal processing.

\end{itemize}

The remainder of this paper is organized as follows.
Sect.~\ref{sec:preliminary} reviews preliminaries regarding Lipschitz-continuous mappings and PnP methods.
Sects.~\ref{sec:LipsAM} and \ref{sec:bounds} detail the theoretical development of Ingredients~A and B, respectively.
Sect.~\ref{sec:algo} applies these ingredients to construct a provably convergent PnP audio signal recovery algorithm.
Sect.~\ref{sec:expt} evaluates the proposed methods through speech dereverberation experiments, and Sect.~\ref{sec:conc} concludes the paper.

\section{Lipschitz-continuity and PnP Algorithm}
\label{sec:preliminary}

This section reviews basic properties of Lipschitz-continuous mappings and PnP algorithms.
Throughout this paper, lowercase bold letters represent vectors,
e.g., $\mathbf{z}=(z_n)_{n=1}^N\in\mathbb{C}^N$, and uppercase bold letters represent matrices,
e.g., $\mathbf{A}=(a_{ij})_{i=1,\,j=1}^{I,\,J}\in\mathbb{C}^{I\times J}$, and $[N]=\{1,\ldots,N\}\subset\mathbb{N}$.

\subsection{Lipschitz-continuity and Lipschitz Constants}

For a Lipschitz-continuous mapping $f:\mathbb{F}^N\to\mathbb{F}^M$ satisfying Eq.~\eqref{eq:lipschitz_continuity}, its Lipschitz constant is defined by
\begin{equation}
\operatorname{Lip}(f) = \sup_{\mathbf{x}\neq\mathbf{y}\in\mathbb{F}^N}\frac{\|f(\mathbf{x})
- f(\mathbf{y})\|_2}{\|\mathbf{x} - \mathbf{y}\|_2}.
\end{equation}
Let $f:\mathbb{F}^N\to\mathbb{F}^M$ be the composition of two Lipschitz-continuous mappings $h:\mathbb{F}^N\to\mathbb{F}^P$ and $g:\mathbb{F}^P\to\mathbb{F}^M$, i.e., $f = g \circ h$.
Then $f$ is also Lipschitz continuous, and its Lipschitz constant satisfies $\operatorname{Lip}(f) \leq \operatorname{Lip}(g)\cdot\operatorname{Lip}(h)$.

For a real-valued and differentiable mapping $f:\mathbb{R}^N\to\mathbb{R}^M$, its Lipschitz constant is equal to the supremum of the operator norm of its Jacobian matrix, i.e.,
\begin{equation}
\operatorname{Lip}(f) = \sup_{\mathbf{x}\in\mathbb{R}^N} \|\mathbf{J}_f(\mathbf{x})\|_{\mathrm{op}},
\label{eq:lip_jacobian}
\end{equation}
where $\|\cdot\|_\mathrm{op}$ is the operator norm, which corresponds to the maximum singular value of the matrix, and $\mathbf{J}_f(\mathbf{x})\in\mathbb{R}^{M\times N}$ is the Jacobian matrix of $f$ at $\mathbf{x}$ defined as follows:
\begin{equation}
\mathbf{J}_f(\mathbf{x}) = \begin{pmatrix}\frac{\partial f_1}{\partial x_1}(\mathbf{x})&\cdots&\frac{\partial f_1}{\partial x_N}(\mathbf{x})\\
\vdots&\ddots&\vdots\\
\frac{\partial f_M}{\partial x_1}(\mathbf{x})&\cdots&\frac{\partial f_M}{\partial x_N}(\mathbf{x})
\end{pmatrix}.
\end{equation}
For a complex-valued mapping $f:\mathbb{C}^N\to\mathbb{C}^M$, we identify it with the real-valued mapping $\widetilde{f}:\mathbb{R}^{2N}\to\mathbb{R}^{2M}$ obtained by stacking the real and imaginary parts, and apply the same characterization in Eq.~\eqref{eq:lip_jacobian}.
The Lipschitz constants of non-differentiable functions (e.g., ReLU) are characterized similarly using Clarke's generalized subdifferentials \cite{clarke1990optimization}. 
Since Lipschitz-continuous mappings are differentiable almost everywhere by Rademacher's theorem, we use derivative-based notation for simplicity in this paper and omit explicit treatment of generalized subdifferentials.

\subsection{PnP Algorithms and Their Convergence}
Signal recovery aims to recover a signal $\mathbf{x}\in\mathbb{R}^N$ from an observed signal $\mathbf{y}\in\mathbb{R}^M$ obtained through a degradation or noisy measurement process.
Many signal recovery tasks can be formulated as the following optimization problem:
\begin{equation}
    \min_{\hat{\mathbf{x}}\in\mathbb{R}^N} \Bigl(\alpha \mathfrak{F}_\mathbf{y}(\hat{\mathbf{x}})+\mathfrak{G}(\hat{\mathbf{x}})\Bigr),
    \label{eq:general_formulation}
\end{equation}
where $\mathfrak{F}_\mathbf{y}:\mathbb{R}^N\to\mathbb{R}\cup\{+\infty\}$ is a data fidelity function and $\alpha>0$ is its weight.
$\mathfrak{G}:\mathbb{R}^N\to\mathbb{R}\cup\{+\infty\}$ is a regularization function that models prior knowledge on the signal.
For this regularization, hand-crafted regularization functions (e.g., the $\ell_1$-norm) have been utilized \cite{boydProximalAlgorithms}.

PnP is a framework that applies a denoiser $\mathcal{D}:\mathbb{R}^N\to\mathbb{R}^N$ in place of the regularization \cite{ryuPlugandPlayMethodsProvably2019,pesquetLearningMaximallyMonotone2021,huraultGRADIENTSTEPDENOISER2022,huraultProximalDenoiserConvergent,yukawaMonotoneLipschitzGradientDenoiser2025}.
An example of a PnP method is the following ADMM-based algorithm \cite{ryuPlugandPlayMethodsProvably2019}:
\begin{equation}
\label{eq:plug-and-play}
\text{PnP-ADMM:}\;\left\lfloor\quad
\begin{aligned}
\hat{\mathbf{x}}^{[k+1]} &= \mathcal{D}(\boldsymbol{\xi}^{[k]}-\boldsymbol{\upsilon}^{[k]}),\\
\boldsymbol{\xi}^{[k+1]} &= \mathrm{prox}_{\alpha \mathfrak{F}_\mathbf{y}}(\hat{\mathbf{x}}^{[k+1]}+\boldsymbol{\upsilon}^{[k]}),\\
\boldsymbol{\upsilon}^{[k+1]} &= \boldsymbol{\upsilon}^{[k]}+\hat{\mathbf{x}}^{[k+1]}-\boldsymbol{\xi}^{[k+1]},
\end{aligned}
\right.
\end{equation}
where $k\in\mathbb{N}$ is the iteration counter, $\boldsymbol{\xi},\boldsymbol{\upsilon}\in\mathbb{R}^N$, and the proximity operator is defined as
\begin{equation}
    \mathrm{prox}_{\alpha\mathfrak{F}_\mathbf{y}}(\mathbf{v}) = \underset{\mathbf{u}\in\mathbb{R}^N}{\mathrm{argmin}}\;\left(\alpha\mathfrak{F}_\mathbf{y}(\mathbf{u}) + \frac{1}{2}\|\mathbf{u}-\mathbf{v}\|_2^2\right).
    \label{eq:prox}
\end{equation}
The algorithm in Eq.~\eqref{eq:plug-and-play} replaces the proximity operator $\mathrm{prox}_{\mathfrak{G}}:\mathbb{R}^N\to\mathbb{R}^N$ in the standard ADMM with a denoiser $\mathcal{D}$ (typically a DNN pre-trained on a denoising task) to take advantage of data-driven regularization.

To guarantee the convergence of PnP algorithms, the denoiser $\mathcal{D}$ must satisfy certain conditions, e.g., 
firm nonexpansiveness (i.e., $\|\mathcal{D}(\mathbf{x})-\mathcal{D}(\mathbf{y})\|^2\leq\langle\mathbf{x}-\mathbf{y}|\mathcal{D}(\mathbf{x})-\mathcal{D}(\mathbf{y})\rangle$ for all $\mathbf{x},\mathbf{y}\in\mathbb{R}^N$, where $\langle\cdot|\cdot\rangle$ denotes the inner product) \cite{pesquetLearningMaximallyMonotone2021},
contractivity of the residual map (i.e., $\operatorname{Lip}(\mathcal{R})<1$, where $\mathcal{R}=\mathrm{Id}-\mathcal{D}$) \cite{ryuPlugandPlayMethodsProvably2019}, or 
the gradient-step property (i.e., $\mathcal{D}=\mathrm{Id}-\nabla \mathfrak{G}$ for some $\mathfrak{G}:\mathbb{R}^N\to\mathbb{R}$) \cite{huraultGRADIENTSTEPDENOISER2022,huraultProximalDenoiserConvergent,yukawaMonotoneLipschitzGradientDenoiser2025}.
In these approaches, the construction of a Lipschitz-continuous DNN is essential, while most of them additionally require estimation of their Lipschitz constants to guarantee convergence.

\section{LipsAM (Lipschitz-continuous Amplitude Modifier): Definition and Architectures}
\label{sec:LipsAM}

This section proposes AM-based DNN architectures with guaranteed Lipschitz continuity (Ingredient~A in Fig.~\ref{fig:diagram}) to realize a convergent PnP algorithm.
The proposed architecture, termed \textit{LipsAM}, is defined in Definition~\ref{def:lipsam} and illustrated in Table~\ref{fig:architectures}.
For the LipsAM introduced here, we prove its Lipschitz continuity (Theorem~\ref{thm:lipsam}), provide a characterization (Theorem~\ref{prop:zero_preserving}), and propose a construction method (Theorem~\ref{thm:mask_sufficient}).

\begin{table*}
    \renewcommand{\arraystretch}{1.0}
    \footnotesize
    \setlength{\tabcolsep}{2pt}
    \caption{DNN Architectures of AMs (Amplitude Modifiers) and LipsAMs (Lipschitz-Continuous AMs)\textsuperscript{*}}
    \begin{tabularx}{\textwidth}{l*{7}{C}}
        \toprule
        \multirow{2}{*}{Type}& \multicolumn{3}{c}{Estimator-based} & \multicolumn{3}{c}{Masking-based} & 
        \multirow{2}{*}{General}\\
        \cmidrule(lr){2-4}\cmidrule(lr){5-7}
         & AM-E & LipsAM-E& 
        ReM-LipsAM-E& 
        AM-M & LipsAM-M & ReM-LipsAM-M &\\
        \midrule
         &
        \includegraphics[width=0.115\textwidth]{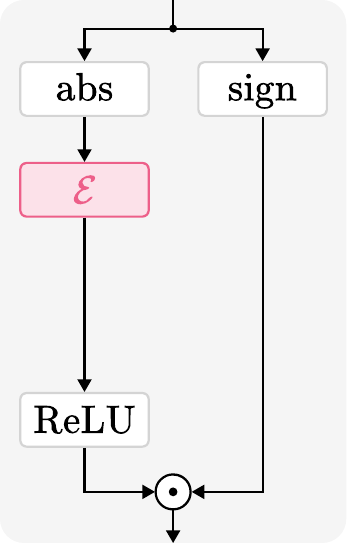} &
        \includegraphics[width=0.115\textwidth]{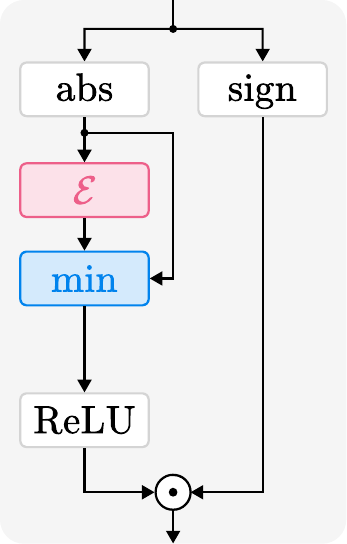} &
        \includegraphics[width=0.115\textwidth]{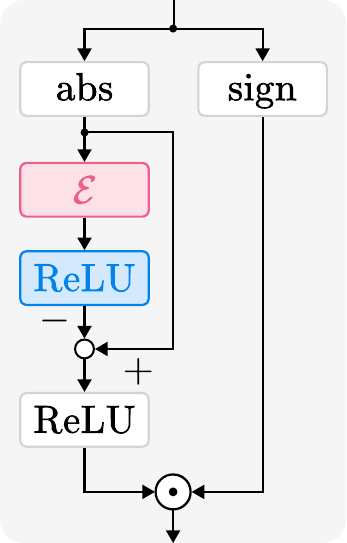} &
        \includegraphics[width=0.115\textwidth]{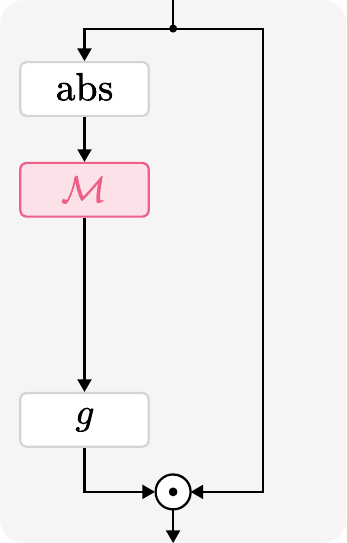} &
        \includegraphics[width=0.115\textwidth]{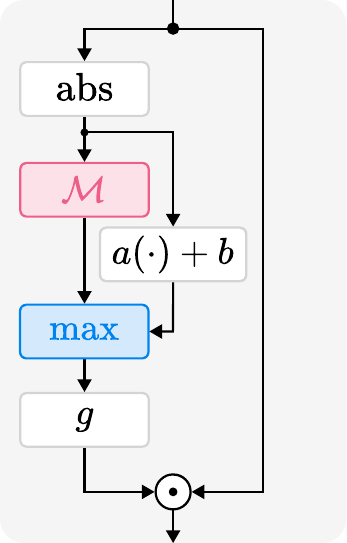} &
        \includegraphics[width=0.115\textwidth]{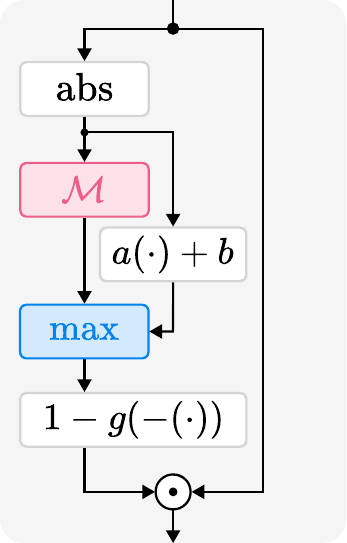} &
        \includegraphics[width=0.115\textwidth]{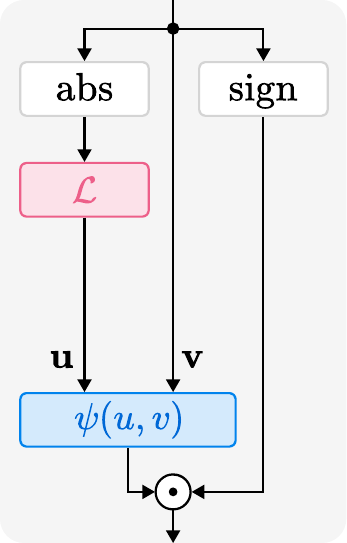} \\
        \midrule
        Notation & $\mathcal{D}_\mathcal{E}$& 
                   $\mathcal{D}_\mathcal{E}^\mathrm{(Lips)}$&                   
                   $\mathcal{D}_\mathcal{E}^\mathrm{(\mathrm{ReM\text{-}Lips})}$&                   
                   $\mathcal{D}_{(\mathcal{M},g)}$ & 
                   $\mathcal{D}_{(\mathcal{M},g,a,b)}^\mathrm{(Lips)}$&
                   $\mathcal{D}_{(\mathcal{M},g,a,b)}^\mathrm{(ReM\text{-}Lips)}$&
                   $\mathcal{D}_{(\mathcal{L},\psi)}$\\
        \midrule
        Lipschitz? & No & Yes & Yes & No & Yes (Theorem \ref{thm:lipsam_sm}) & Yes & Yes (Theorem \ref{thm:generalLipsAM})\\
        \midrule
        Bound & --- & $\sqrt{(\operatorname{Lip}(\mathcal{E}))^2+1}$ & $\operatorname{Lip}(\mathcal{E})+1$ & 
        ---  & 
        \multicolumn{2}{c}{Numerical computation (Sect.~\ref{sec:lips_bound_se_re})} & Theorem~\ref{thm:worst_case_lipschitz_bound}\\
        \bottomrule
    \end{tabularx}
    \label{fig:architectures}
    \\[4pt]
    \footnotesize{
    \textsuperscript{*}%
    The blue-shaded layers in the proposed LipsAMs guarantee the Lipschitz continuity of the overall network, whereas conventional AMs are generally not Lipschitz continuous (Sect.~\ref{sec:LipsAM}).
    The estimator-based LipsAMs were introduced in our preliminary conference publication \cite{matsumotoLIPSAMLIPSCHITZCONTINUOUSAMPLITUDE}, whereas the masking-based LipsAMs are newly proposed in this paper.
    The general form in the rightmost column is utilized for establishing an efficient framework for evaluating Lipschitz constants (Sect.~\ref{sec:framework}), and the tightest global upper bounds on $\operatorname{Lip}(\mathcal{D})$ for each architecture (Sect.~\ref{sec:lips_bound_se_re}) are summarized in the bottom row.
    Symbols of the form $\mathcal{D}:\mathbb{C}^N\to\mathbb{C}^N$ denote full DNN mappings, whereas $\mathcal{E}$, $\mathcal{M}$, and $\mathcal{L}:\mathbb{R}_+^N\to\mathbb{R}^N$ represent internal learnable operators. The function $g$ is bounded (e.g., the sigmoid function), $a>0$, $b\in\mathbb{R}$, and $\psi:\mathbb{R}\times\mathbb{R}_+\to\mathbb{R}_+$.
    }
\end{table*}

\subsection{An AM in Eq.~\eqref{eq:amplitude_modifier} is not Lipschitz Continuous in General}
\label{sec:pitfall}

An AM modifies the amplitude components of a complex-valued signal while preserving its phase components as in Eq.~\eqref{eq:amplitude_modifier}.
A key issue with this approach is that the Lipschitz continuity of the amplitude-modifying part $\mathcal{A}$ alone does not imply the Lipschitz continuity of the overall mapping $\mathcal{D}_{\!\mathcal{A}}$.
This issue stems from the discontinuity of $\operatorname{sign}(z_n)$ at $z_n=0$ for some $n\in[N]$, as demonstrated in the following examples.

\begin{example}[Bias]
\label{ex:bias}
Let $\mathcal{A}:\mathbb{R}_+\to\mathbb{R}_+$ be given by $\mathcal{A}(x)=x+1$.
Then $\mathcal{A}$ is Lipschitz continuous.
However, the corresponding AM given by $\mathcal{D}_{\!\mathcal{A}}(z)=(|z|+1)\cdot\operatorname{sign}(z)$ is not even continuous at $z=0$ and is therefore not Lipschitz continuous.
\end{example}

\begin{example}[Permutation]
\label{ex:permutation}
Let $\mathcal{A}:\mathbb{R}_+^2\to\mathbb{R}_+^2$ be given by $\mathcal{A}(x_1,x_2)=(x_2,x_1)$.
Then $\mathcal{A}$ is Lipschitz continuous.
However, the corresponding AM given by $\mathcal{D}_{\!\mathcal{A}}(z_1,z_2) =(|z_2|\cdot\operatorname{sign}(z_1),|z_1|\cdot\operatorname{sign}(z_2))$ is not Lipschitz continuous.
Indeed, $(\mathcal{D}_{\!\mathcal{A}}(z_1,1))_1=\operatorname{sign}(z_1)$ is not continuous at $z_1=0$.
\end{example}

\subsection{Characterization of Lipschitz-continuous AMs}

We now clarify the conditions on the amplitude-modifying parts to guarantee the Lipschitz continuity of the AMs.
To this end, we propose a class of AMs named \textbf{LipsAM} as follows:
\begin{dfn}[LipsAM]
\label{def:lipsam}
A mapping $\mathcal{D}_{\!\mathcal{A}}:\mathbb{C}^N\to\mathbb{C}^N$ is a \textit{LipsAM} if there exists $\mathcal A:\mathbb R_+^N\to\mathbb R_+^N$ such that $\mathcal D_{\!\mathcal A}(\mathbf z) = \mathcal A(|\mathbf z|)\odot\operatorname{sign}(\mathbf z)$ for all $\mathbf z\in\mathbb C^N$, and the following two conditions are satisfied for some $L_1,L_2\ge0$.
\begin{enumerate}[label=(\roman*)]
    \item $\mathcal{A}$ is $L_1$-Lipschitz continuous, i.e., for all $\mathbf{x},\mathbf{y}\in\mathbb{R}_+^N$,
    \begin{equation}
        \|\mathcal{A}(\mathbf{x}) - \mathcal{A}(\mathbf{y}) \|_2 \leq L_1 \|\mathbf{x}-\mathbf{y}\|_2.
        \label{eq:cond1}
    \end{equation}
    
    \item For all $\mathbf{x}\in\mathbb{R}_+^N$ and all $n\in[N]$,
    \begin{equation}
    (\mathcal{A}(\mathbf{x}))_n \le L_2\,x_n.
    \label{eq:cond2}
    \end{equation}
\end{enumerate}
\end{dfn}

We show that LipsAMs are equivalent to the set of all Lipschitz-continuous AMs, which is a key result of this work%
\footnote{
In our conference paper \cite{matsumotoLIPSAMLIPSCHITZCONTINUOUSAMPLITUDE}, the conditions of LipsAM in Definition~\ref{def:lipsam} were shown to be sufficient for
an AM to be Lipschitz continuous.
Theorem~\ref{thm:lipsam} in this paper further establishes their necessity, completing the equivalence.
}.

\begin{thm}[Characterization of Lipschitz-continuous AMs]
\label{thm:lipsam}
Let $\mathcal{D}_{\!\mathcal{A}}:\mathbb{C}^N\to\mathbb{C}^N$ be an AM.
Then $\mathcal{D}_{\!\mathcal{A}}$ is Lipschitz continuous if and only if $\mathcal{D}_{\!\mathcal{A}}$ is a LipsAM in Definition~\ref{def:lipsam}.
\begin{proof}
See Appendix~\ref{app:proof_thm_lipsam}.
\end{proof}
\end{thm}

We next reformulate the second condition into a more tractable form for designing LipsAM architectures.

\begin{thm}[Zero-preserving characterization]
\label{prop:zero_preserving}
Let $\mathcal{D}_{\!\mathcal{A}}:\mathbb{C}^N\to\mathbb{C}^N$ be an AM given by $\mathcal{D}_{\!\mathcal{A}}(\mathbf{z})=\mathcal{A}(|\mathbf{z}|)\odot\operatorname{sign}(\mathbf{z})$, where $\mathcal{A}:\mathbb{R}_+^N\to\mathbb{R}_+^N$.
Then $\mathcal{D}_{\!\mathcal{A}}$ is a LipsAM (equivalently, Lipschitz continuous) if and only if $\mathcal{A}$ is Lipschitz continuous and, for any $n\in[N]$ and $\mathbf{x}\in\mathbb{R}_+^N$, 
\begin{equation}
    x_n=0\;\Rightarrow\;(\mathcal{A}(\mathbf{x}))_n=0.
    \label{eq:zero_preserving}
\end{equation}
\end{thm}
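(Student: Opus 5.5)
Given Theorem~\ref{thm:lipsam}, it suffices to show that, when $\mathcal{A}$ is Lipschitz continuous, condition (ii) of Definition~\ref{def:lipsam} is equivalent to the zero-preserving property \eqref{eq:zero_preserving}. Condition (i) is literally the Lipschitz continuity of $\mathcal{A}$ in both statements, so the proof reduces to two implications, each carried out coordinatewise for a fixed $n\in[N]$.

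For (ii) $\Rightarrow$ \eqref{eq:zero_preserving}, let $\mathbf{x}\in\mathbb{R}_+^N$ with $x_n=0$. Then \eqref{eq:cond2} gives $(\mathcal{A}(\mathbf{x}))_n\le L_2\cdot 0=0$. Since $\mathcal{A}$ maps into $\mathbb{R}_+^N$, we also have $(\mathcal{A}(\mathbf{x}))_n\ge0$, hence $(\mathcal{A}(\mathbf{x}))_n=0$. This direction does not use Lipschitz continuity.

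For \eqref{eq:zero_preserving} together with (i) $\Rightarrow$ (ii), fix $\mathbf{x}\in\mathbb{R}_+^N$ and let $\mathbf{x}'$ be $\mathbf{x}$ with its $n$-th entry set to $0$. Then $\mathbf{x}'\in\mathbb{R}_+^N$ and $\|\mathbf{x}-\mathbf{x}'\|_2=x_n$. By \eqref{eq:zero_preserving}, $(\mathcal{A}(\mathbf{x}'))_n=0$, so
\begin{equation*}
(\mathcal{A}(\mathbf{x}))_n=(\mathcal{A}(\mathbf{x}))_n-(\mathcal{A}(\mathbf{x}'))_n\le\|\mathcal{A}(\mathbf{x})-\mathcal{A}(\mathbf{x}')\|_2\le L_1\|\mathbf{x}-\mathbf{x}'\|_2=L_1x_n .
\end{equation*}
Thus (ii) holds with $L_2=L_1$, and the constant is uniform in $n$ and $\mathbf{x}$, as the definition requires.

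Combining the two implications with Theorem~\ref{thm:lipsam} yields the stated equivalence. The only steps needing care are making sure the perturbed point $\mathbf{x}'$ stays in the domain $\mathbb{R}_+^N$ (it does, since only one coordinate is lowered to $0$), and recalling the nonnegativity of the codomain in the first direction. I do not expect any substantive obstacle.
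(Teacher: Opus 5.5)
Your proof is correct and is essentially the paper's own argument. The forward direction uses $0\le(\mathcal{A}(\mathbf{x}))_n\le L_2x_n=0$, and the converse sets the $n$th coordinate to zero and applies the Lipschitz bound of $\mathcal{A}$ to obtain Eq.~\eqref{eq:cond2} with $L_2=L_1$.

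Like the paper, you tacitly treat ``$\mathcal{D}_{\!\mathcal{A}}$ is a LipsAM'' as a condition on the given $\mathcal{A}$. That is the intended reading, but strictly $\mathcal{A}$ is not determined by $\mathcal{D}_{\!\mathcal{A}}$ where $x_n=0$: for $N=1$, $\mathcal{A}(x)=x$ for $x>0$ with $\mathcal{A}(0)=1$ yields $\mathcal{D}_{\!\mathcal{A}}=\mathrm{Id}$. So the parenthetical equivalence with Lipschitz continuity implicitly assumes those values are chosen sensibly, e.g., that $\mathcal{A}$ is continuous.
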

\begin{proof}
See Appendix~\ref{app:proof_prop_zero_preserving}.
\end{proof}

We refer to Eq.~\eqref{eq:zero_preserving} as the \textit{zero-preserving property}.
This property prevents discontinuous behavior of AMs, such as those seen in Examples~\ref{ex:bias} and \ref{ex:permutation}.
Moreover, the zero-preserving property can be easily verified from the structure of a DNN.

\subsection{LipsAM-E: Direct Estimation of Amplitude via LipsAM}

To show the usefulness of the zero-preserving characterization, let us introduce AM-E (``-E'' denotes ``estimator''), an AM that directly estimates the amplitude components using a learnable mapping $\mathcal{E}:\mathbb{R}_+^N \to \mathbb{R}^N$, defined as 
\begin{equation}
\text{AM-E :}\quad\mathcal{D}_{\mathcal{E}}(\mathbf{z}) = (\mathcal{E}(|\mathbf{z}|))_+ \odot \operatorname{sign}(\mathbf{z}),
\label{eq:signal}
\end{equation}
where $(\cdot)_+=\max(\cdot,0)$ denotes the ReLU function.
Note that, even if $\mathcal{E}$ is assumed to be Lipschitz continuous, $\mathcal{D}_{\mathcal{E}}$ is generally not Lipschitz continuous since the zero-preserving property in Eq.~\eqref{eq:zero_preserving} does not necessarily hold.

To guarantee Lipschitz continuity, we propose \textbf{LipsAM-E} by incorporating the element-wise $\min$ layer (highlighted in blue in Table~\ref{fig:architectures}) to AM-E as follows:
\begin{align}
\text{LipsAM-E : }\;&\mathcal{D}_{\mathcal{E}}^\mathrm{(Lips)}
(\mathbf{z}) = \nonumber\\
&\qquad(\min(\mathcal{E}(|\mathbf{z}|), |\mathbf{z}|))_+ \odot \operatorname{sign}(\mathbf{z}).
\label{eq:signallim}
\end{align}

The Lipschitz continuity of LipsAM-E is guaranteed whenever $\mathcal{E}$ is Lipschitz continuous, which can be verified through the zero-preserving property. 
The amplitude-modifying part of LipsAM-E is written as follows:
\begin{align}
\mathcal{A}_{\mathcal{E}}^\mathrm{(Lips)}(\mathbf{x}) &= (\min(\mathcal{E}(\mathbf{x}), \mathbf{x}))_+.
\label{eq:AinSE}
\end{align}
If $\mathcal{E}$ is Lipschitz continuous, then the mapping $\mathcal{A}_{\mathcal{E}}^\mathrm{(Lips)}$ is also Lipschitz continuous.
Furthermore, for any input with 0 elements (i.e., $x_n = 0$), the output $(\mathcal{A}_{\mathcal{E}}^\mathrm{(Lips)}(\mathbf{x}))_n = (\min((\mathcal{E}(\mathbf{x}))_n, 0))_+$ is always zero.
Therefore, $\mathcal{A}_{\mathcal{E}}^\mathrm{(Lips)}$ satisfies the zero-preserving property, which implies that $\mathcal{D}_{\mathcal{E}}^\mathrm{(Lips)}$ is a LipsAM (and thus Lipschitz continuous) by Theorem~\ref{prop:zero_preserving}.

\subsection{LipsAM-M: Lipschitz-continuous Time-Frequency Masking}
\label{sec:masking}

Next, we introduce AM-M defined as follows:
\begin{align}
\text{AM-M}:\quad \mathcal{D}_{(\mathcal{M},g)}(\mathbf{z}) &= G(\mathcal{M}(|\mathbf{z}|)) \odot \mathbf{z}
\nonumber
\\
&=(G(\mathcal{M}(|\mathbf{z}|)) \odot |\mathbf{z}|) \odot \operatorname{sign}(\mathbf{z})\label{eq:masking},
\end{align}
where $\mathcal{M}:\mathbb{R}_+^N \to \mathbb{R}^N$ is a learnable mapping, $g:\mathbb{R}\to[0,B_g]$ is a bounded non-negative function with upper bound $B_g\geq0$, and $G:\mathbb{R}^N\to[0,B_g]^N$ applies $g$ element-wise as $G(\mathbf{t})=(g(t_n))_{n=1}^N$ for all $\mathbf{t}\in\mathbb{R}^N$.
AM-M is based on time-frequency masking, a typical technique in audio signal processing \cite{narayananIdealRatioMask2013,hersheyDeepClusteringDiscriminative2016} (``-M'' denotes masking).
The sigmoid function is often used for $g$, and in this case $B_g=1$.

Since an AM-M is generally not a LipsAM, we derive a sufficient condition for the Lipschitz continuity of AM-M (Theorem~\ref{thm:mask_sufficient}) and propose LipsAM-M in Eq.~\eqref{eq:LipsAM-M}.

\subsubsection{A Masking-based AM is not a LipsAM in General}
The amplitude-modifying part of AM-M in Eq.~\eqref{eq:masking} is
\begin{equation}
    \mathcal{A}_{(\mathcal{M},g)}(\mathbf{x}) = G(\mathcal{M}(\mathbf{x})) \odot \mathbf{x}.
\end{equation}
Although this mapping admits the zero-preserving property (i.e., $(\mathcal{A}_{(\mathcal{M},g)}(\mathbf{x}))_n = 0$ if $x_n=0$), it is not Lipschitz continuous in general.
\begin{example}
\label{ex:non-lipschitz}
Let $\mathcal{M}(x)=\sin(x)$ and $g(t)=\mathrm{sigmoid}(4t)$.
Then, both $\mathcal{M}$ and $g$ are $1$-Lipschitz continuous%
\footnote{
The factor $4$ inside $g(t)=\mathrm{sigmoid}(4t)$ is chosen to make Fig.~\ref{fig:nonLipsExample} visually clear, while also ensuring $\operatorname{Lip}(g)=1$.
This factor is not essential to provide a non-Lipschitz example. 
For instance, $\mathcal{A}_{(\mathcal{M},g)}$ is still not Lipschitz continuous for $\mathcal{M}(x)=\sin(x)$ and $g(t)=\mathrm{sigmoid}(t)$.
}.
However, the mapping $\mathcal{A}_{(\mathcal{M},g)}(x)=\mathrm{sigmoid}(4\sin(x))\cdot x$ is not Lipschitz continuous because $(\mathrm{d}\mathcal{A}_{(\mathcal{M},g)}/\mathrm{d}x)=
\mathrm{sigmoid}(4\sin(x))
+
4x\cos(x)\,
\mathrm{sigmoid}(4\sin(x))
(1-\mathrm{sigmoid}(4\sin(x)))
$ can be arbitrarily large as $x\to+\infty$
(see the gray lines in Fig.~\ref{fig:nonLipsExample}).
\end{example}

We now analyze why $\mathcal{A}_{(\mathcal{M},g)}$ is generally not Lipschitz continuous. 
The Jacobian of $\mathcal{A}_{(\mathcal{M},g)}$ is given by 
\begin{align}
\mathbf{J}_{\mathcal{A}_{(\mathcal{M},g)}}(\mathbf{x}) = \mathrm{diag}(G(\mathcal{M}(\mathbf{x}))) + \boldsymbol{\Phi}_{(\mathcal{M},\,g)}(\mathbf{x})\,\mathbf{J}_{\mathcal{M}}(\mathbf{x}),
\label{eq:jacobian}
\end{align}
where $\boldsymbol{\Phi}_{(\mathcal{M},g)}\in\mathbb{R}^{N\times N}$ is a diagonal matrix defined by
\begin{align}
    &\boldsymbol{\Phi}_{(\mathcal{M},\,g)}(\mathbf{x})=\mathrm{diag}(\mathbf{x})\,\mathbf{J}_{G}(\mathcal{M}(\mathbf{x}))\label{eq:theopnorm}\\ 
    &\!{}={}\mathrm{diag}\left(x_1\!\cdot\!(\mathrm{d}g/\mathrm{d}t)((\mathcal{M}(\mathbf{x}))_1), \ldots, x_N\!\cdot\!(\mathrm{d}g/\mathrm{d}t)((\mathcal{M}(\mathbf{x}))_N)\right),\nonumber
\end{align}
and its operator norm is given by
\begin{align}
\|\boldsymbol{\Phi}_{(\mathcal{M},g)}(\mathbf{x})\|_{\mathrm{op}}
=\max_{n=1,\ldots,N} |x_n\cdot(\mathrm{d}g/\mathrm{d}t)((\mathcal{M}(\mathbf{x}))_n)|.
\label{eq:Phiop}
\end{align}
The Lipschitz continuity of $\mathcal{A}_{(\mathcal{M},g)}$ requires the operator norm of $\mathbf{J}_{\mathcal{A}_{(\mathcal{M},g)}}(\mathbf{x})$ in Eq.~\eqref{eq:jacobian} to be globally bounded.
However, $\|\boldsymbol{\Phi}_{(\mathcal{M},g)}(\mathbf{x})\|_{\mathrm{op}}$ is not necessarily bounded because $x_n$ can be arbitrarily large.
Note that restricting the analysis to bounded inputs is insufficient for our purpose, as an iterative algorithm without guaranteed convergence might generate an unbounded sequence.
Thus, we need some mechanism that keeps this operator norm bounded even for arbitrarily large inputs.

\begin{rem}
Clipping the input amplitude to a bounded range before applying the mask can ensure Lipschitz continuity, yet this modification does not preserve the form in Eq.~\eqref{eq:masking}.
\end{rem}

\begin{figure}
\centering
\includegraphics[scale=0.5]{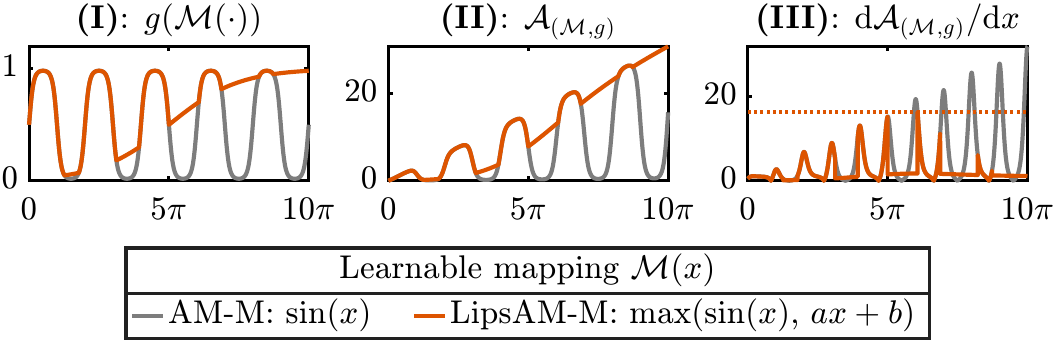}
\caption{
    Visualization of Example~\ref{ex:non-lipschitz} (non-Lipschitz continuous AM-M).
    The mask value $g(\mathcal{M}(\cdot))$ of this AM-M shown in \textbf{(I)} continues to oscillate even for large inputs,
    causing the amplitude-modifying part $\mathcal{A}_{(\mathcal{M},g)}(x)=g(\mathcal{M}(x))\cdot x$ in \textbf{(II)} to be non-Lipschitz continuous.
    The proposed LipsAM-M (red) resolves this issue; 
    the derivative of $\mathcal{A}_{(\mathcal{M},g)}$ is bounded by the dotted horizontal line in \textbf{(III)}, thus $\mathcal{A}_{(\mathcal{M},g)}$ is Lipschitz continuous.
    Here, we set $g(t)=\mathrm{sigmoid}(4t)$, $a=1/5\pi$, and $b=-1$.
}
\label{fig:nonLipsExample}
\end{figure}

\subsubsection{A Sufficient Condition for Lipschitz-continuity of Masking Operators}

To make AM-M (i.e., $\mathcal{D}_{(\mathcal{M},g)}$ in Eq.~\eqref{eq:masking}) Lipschitz continuous, we derive a sufficient condition on $\mathcal{M}$ and $g$ for global boundedness of $\|\boldsymbol{\Phi}_{(\mathcal{M},g)}(\mathbf{x})\|_{\mathrm{op}}$ in Eq.~\eqref{eq:Phiop}.

\begin{thm}[Sufficient condition for Lipschitz-continuity]
\label{thm:mask_sufficient}
Let $\mathcal{M}:\mathbb{R}_+^N \to \mathbb{R}^N$ and $g:\mathbb{R} \to [0, B_g]$ be Lipschitz continuous.
Define $G:\mathbb{R}^N\to[0, B_g]^N:\mathbf{t}\mapsto(g(t_n))_{n=1}^N$.
Then $\mathcal{D}_{(\mathcal{M},g)}(\mathbf{z})=G(\mathcal{M}(|\mathbf{z}|)) \odot \mathbf{z}$ is a LipsAM if both of the following conditions hold:
\begin{enumerate}[label=(\roman*)]
    \item There exists a constant $C_\mathcal{M}\geq 0$ independent of $n$, such that, 
    for every sequence $(\mathbf{x}^{[k]})_{k\in\mathbb{N}}$ in $\mathbb{R}_+^N$ and $n\in[N]$, 
    \begin{equation}
    x_n^{[k]} \to +\infty\;\Rightarrow\;\limsup_{k \to \infty} \left| \frac{x_n^{[k]}}{(\mathcal{M}(\mathbf{x}^{[k]}))_n} \right| \leq C_\mathcal{M}.
    \label{eq:condition1}
    \end{equation}
    \item $|t\cdot (\mathrm{d}g/\mathrm{d}t)(t)|$ is globally bounded on $t\in\mathbb{R}$.
\end{enumerate}
\end{thm}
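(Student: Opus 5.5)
By Theorem~\ref{prop:zero_preserving}, it suffices to show two things about $\mathcal{A}_{(\mathcal{M},g)}(\mathbf{x})=G(\mathcal{M}(\mathbf{x}))\odot\mathbf{x}$. First, it must be zero-preserving. Second, it must be Lipschitz continuous on $\mathbb{R}_+^N$. Zero preservation is immediate: $x_n=0$ gives $(\mathcal{A}_{(\mathcal{M},g)}(\mathbf{x}))_n=g((\mathcal{M}(\mathbf{x}))_n)\cdot 0=0$. The whole task is therefore Lipschitz continuity. Since $\mathcal{M}$ and $g$ are Lipschitz, $\mathcal{A}_{(\mathcal{M},g)}$ is locally Lipschitz and differentiable almost everywhere. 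I will bound the operator norm of the Jacobian in Eq.~\eqref{eq:jacobian} uniformly. Then Eq.~\eqref{eq:lip_jacobian} gives the global Lipschitz bound, via integration along segments in the convex set $\mathbb{R}_+^N$.

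The Jacobian splits into two terms. The first satisfies $\|\mathrm{diag}(G(\mathcal{M}(\mathbf{x})))\|_{\mathrm{op}}\le B_g$. For the second, $\|\boldsymbol{\Phi}_{(\mathcal{M},g)}(\mathbf{x})\mathbf{J}_{\mathcal{M}}(\mathbf{x})\|_{\mathrm{op}}\le\|\boldsymbol{\Phi}_{(\mathcal{M},g)}(\mathbf{x})\|_{\mathrm{op}}\operatorname{Lip}(\mathcal{M})$. So the crux is a uniform bound on Eq.~\eqref{eq:Phiop}, i.e., on $|x_n\, g'(t_n)|$ with $t_n=(\mathcal{M}(\mathbf{x}))_n$. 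Let $K=\sup_t|t\,g'(t)|<\infty$ by condition (ii). I will split into cases by the size of $x_n$.

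\textbf{Large $x_n$.} I first convert the sequential condition (i) into a uniform statement. There exists $R>0$ such that for all $\mathbf{x}\in\mathbb{R}_+^N$ and all $n$, $x_n\ge R$ implies $x_n\le (C_\mathcal{M}+1)|(\mathcal{M}(\mathbf{x}))_n|$. This follows by contradiction. If it failed, for each $k$ there would be $\mathbf{x}^{[k]}$ and $n_k$ with $x^{[k]}_{n_k}\ge k$ and $|x^{[k]}_{n_k}/(\mathcal{M}(\mathbf{x}^{[k]}))_{n_k}|>C_\mathcal{M}+1$, with the ratio read as $+\infty$ when the denominator vanishes. Passing to a subsequence with constant $n_k=n$ (possible since $N$ is finite) contradicts Eq.~\eqref{eq:condition1}. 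For such $x_n$, we get $|x_n g'(t_n)|\le (C_\mathcal{M}+1)|t_n g'(t_n)|\le (C_\mathcal{M}+1)K$.

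\textbf{Small $x_n$.} For $x_n<R$, we have $|x_n g'(t_n)|\le R\operatorname{Lip}(g)$.

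Combining the two cases gives
\begin{equation*}
\operatorname{Lip}(\mathcal{A}_{(\mathcal{M},g)})\le B_g+\operatorname{Lip}(\mathcal{M})\max\{(C_\mathcal{M}+1)K,\;R\operatorname{Lip}(g)\}<\infty .
\end{equation*}
Together with zero preservation, Theorem~\ref{prop:zero_preserving} then yields that $\mathcal{D}_{(\mathcal{M},g)}$ is a LipsAM.

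The main obstacle is the uniformization step, which turns the limsup condition into a threshold $R$ that works for all $\mathbf{x}$ and $n$ simultaneously. This needs care with the interpretation of the ratio when $(\mathcal{M}(\mathbf{x}))_n=0$, which I treat as $+\infty$ so that condition (i) forbids it for large $x_n$. A second technical point is the nondifferentiable points of $g$ and $\mathcal{M}$. I handle these either by Clarke generalized derivatives, as the paper permits, or more cleanly without derivatives. In the latter route, I write
\begin{equation*}
x_n g(t_n)-y_n g(s_n)=(x_n-y_n)g(t_n)+y_n\bigl(g(t_n)-g(s_n)\bigr)
\end{equation*}
and bound $|y_n(g(t_n)-g(s_n))|$ by the same two-case argument applied along the segment joining $\mathbf{x}$ and $\mathbf{y}$.
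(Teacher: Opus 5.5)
Your proposal is correct and follows essentially the same route as the paper. You reduce to Lipschitz continuity of $\mathcal{A}_{(\mathcal{M},g)}$ via the zero-preserving characterization, split the Jacobian into $\mathrm{diag}(G(\mathcal{M}(\mathbf{x})))$ and $\boldsymbol{\Phi}\mathbf{J}_{\mathcal{M}}$, and bound $|x_n\,(\mathrm{d}g/\mathrm{d}t)((\mathcal{M}(\mathbf{x}))_n)|$ with a large-$x_n$/small-$x_n$ case split, reaching the same final bound (the paper uses a general $\epsilon$, you take $\epsilon=1$). Your contradiction argument, passing to a subsequence with fixed index $n$, also justifies the uniform threshold $R_\epsilon$ that the paper only asserts follows from Condition~(i).
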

\begin{proof}
    See Appendix~\ref{sec:proof_thm:mask_sufficient}.
\end{proof}

Condition~(i) requires that when $x_n$ diverges, $|(\mathcal{M}(\mathbf{x}))_n|$ must grow at least linearly.
Condition~(ii) requires that $(\mathrm{d}g/\mathrm{d}t)(t)$ decays to zero at a rate of at least $1/|t|$.
Together, these conditions ensure that $(\mathrm{d}g/\mathrm{d}t)((\mathcal{M}(\mathbf{x}))_n)$ decays sufficiently fast relative to the growth of $x_n$, thereby keeping $|x_n\cdot(\mathrm{d}g/\mathrm{d}t)((\mathcal{M}(\mathbf{x}))_n)|$ globally bounded.

Note that Condition~(ii) is mild,
as typical functions for $g$, including the sigmoid function, satisfy this condition. 
Thus, the remaining requirement is the construction of $\mathcal{M}$ satisfying Condition~(i), which we address in the next subsubsection.

\subsubsection{LipsAM-M: A Lipschitz-continuous Masking-based Architecture}

We propose \textbf{LipsAM-M}, a globally Lipschitz-continuous masking-based AM, which we define as follows:
\begin{align}
&\text{LipsAM-M}:{}
\mathcal{D}_{(\mathcal{M},g,a,b)}^\mathrm{(Lips)}(\mathbf{z})
= \nonumber \\
&\qquad\qquad\qquad\qquad G\Bigl(\max(\mathcal{M}(|\mathbf{z}|),\, a|\mathbf{z}| + b \mathbf{1})\Bigr)\odot \mathbf{z},
\label{eq:LipsAM-M}
\end{align}
where $a>0$ and $b\in\mathbb{R}$ are parameters.
This architecture is illustrated in Table~\ref{fig:architectures}.
Below, we show the Lipschitz continuity of this architecture using Theorem~\ref{thm:mask_sufficient}.

\begin{thm}
\label{thm:lipsam_sm}
Assume that $\mathcal{M}:\mathbb{R}_+^N \to \mathbb{R}^N$ and $g:\mathbb{R}\to[0,B_g]$ are Lipschitz continuous, and $g$ satisfies Condition~(ii) in Theorem~\ref{thm:mask_sufficient}.
Let $a>0$ and $b\in\mathbb{R}$.
Then $\mathcal{D}_{(\mathcal{M},g,a,b)}^\mathrm{(Lips)}$ defined in Eq.~\eqref{eq:LipsAM-M} is a LipsAM.
\end{thm}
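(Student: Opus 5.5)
The plan is to apply Theorem~\ref{thm:mask_sufficient} with the modified inner map
\[
\widetilde{\mathcal{M}}(\mathbf{x})=\max\bigl(\mathcal{M}(\mathbf{x}),\,a\mathbf{x}+b\mathbf{1}\bigr).
\]
With this choice, $\mathcal{D}_{(\mathcal{M},g,a,b)}^\mathrm{(Lips)}=\mathcal{D}_{(\widetilde{\mathcal{M}},g)}$. Condition~(ii) on $g$ is assumed, and $g$ is Lipschitz with values in $[0,B_g]$ by hypothesis. So it remains to check two things: that $\widetilde{\mathcal{M}}$ is Lipschitz continuous, and that it satisfies Condition~(i).

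\textbf{Lipschitz continuity of $\widetilde{\mathcal{M}}$.} The element-wise maximum obeys $|\max(u_1,v_1)-\max(u_2,v_2)|\le\max(|u_1-u_2|,|v_1-v_2|)$. Hence, for each $n$,
\[
|(\widetilde{\mathcal{M}}(\mathbf{x}))_n-(\widetilde{\mathcal{M}}(\mathbf{y}))_n|^2\le|(\mathcal{M}(\mathbf{x}))_n-(\mathcal{M}(\mathbf{y}))_n|^2+a^2|x_n-y_n|^2 .
\]
Summing over $n$ gives $\|\widetilde{\mathcal{M}}(\mathbf{x})-\widetilde{\mathcal{M}}(\mathbf{y})\|_2\le\sqrt{\operatorname{Lip}(\mathcal{M})^2+a^2}\,\|\mathbf{x}-\mathbf{y}\|_2$.

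\textbf{Condition~(i).} Take any sequence with $x_n^{[k]}\to+\infty$. Since $a>0$, for all sufficiently large $k$ we have $ax_n^{[k]}+b>0$. Then
\[
(\widetilde{\mathcal{M}}(\mathbf{x}^{[k]}))_n\ge ax_n^{[k]}+b>0,
\qquad\text{so}\qquad
\left|\frac{x_n^{[k]}}{(\widetilde{\mathcal{M}}(\mathbf{x}^{[k]}))_n}\right|\le\frac{x_n^{[k]}}{ax_n^{[k]}+b}\to\frac1a .
\]
Therefore Condition~(i) holds with $C_{\mathcal{M}}=1/a$, which does not depend on $n$. Theorem~\ref{thm:mask_sufficient} then yields that $\mathcal{D}_{(\mathcal{M},g,a,b)}^\mathrm{(Lips)}$ is a LipsAM.

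\textbf{Main subtlety.} The only delicate point is the lower bound in Condition~(i). The denominator can be zero or negative for small $x_n$, but the condition is asymptotic, so only the tail of the sequence matters; there the affine floor $ax_n+b$ dominates whatever $\mathcal{M}$ does. Note also that the zero-preserving property is automatic for masking, since $x_n$ multiplies the mask. This is presumably already handled inside the proof of Theorem~\ref{thm:mask_sufficient}, so nothing extra is needed here.
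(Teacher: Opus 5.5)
Your proposal is correct and follows the paper's proof essentially step for step: you rewrite the architecture as $\mathcal{D}_{(\widetilde{\mathcal{M}},g)}$ with $\widetilde{\mathcal{M}}(\mathbf{x})=\max(\mathcal{M}(\mathbf{x}),a\mathbf{x}+b\mathbf{1})$, observe that $\widetilde{\mathcal{M}}$ is Lipschitz, use the affine floor to verify Condition~(i) with $C_{\mathcal{M}}=1/a$, and invoke Theorem~\ref{thm:mask_sufficient}. The only difference is that you give the explicit constant $\sqrt{\operatorname{Lip}(\mathcal{M})^2+a^2}$ for $\widetilde{\mathcal{M}}$, whereas the paper just notes that an element-wise maximum of Lipschitz maps is Lipschitz.
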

\begin{proof}
See Appendix~\ref{app:proof_thm_lipsam_sm}.
\end{proof}

The key to the proof is that the max operation in Eq.~\eqref{eq:LipsAM-M} enforces Condition~(i) in Theorem~\ref{thm:mask_sufficient}, regardless of the choice of $\mathcal{M}$.
As an example, Fig.~\ref{fig:nonLipsExample} compares LipsAM-M with the AM-M in Example~\ref{ex:non-lipschitz}.
By replacing $\mathcal{M}(x)=\sin(x)$ with $\max(\sin(x), ax+b)$, the derivative of the amplitude-modifying part is successfully bounded, as shown in Fig.~\ref{fig:nonLipsExample}~(III).

\begin{rem}
LipsAM-M in Eq.~\eqref{eq:LipsAM-M} converges pointwise to AM-M in Eq.~\eqref{eq:masking} as $b\to-\infty$ with $a>0$ remaining bounded.
\end{rem}

\subsection{Using LipsAMs as Noise Estimator}
\label{sec:residualmap}

A denoiser in residual form, $\mathcal{D}=\mathrm{Id}-\mathcal{R}$, tends to be easier to train.
We therefore consider using LipsAMs as $\mathcal{R}$.

By plugging LipsAM-E into the residual form, we propose \textbf{ReM-LipsAM-E} defined as follows:
\begin{equation}
\text{ReM-LipsAM-E :}\quad \mathcal{D}_{\mathcal{E}}^{\mathrm{(ReM\text{-}Lips)}}(\mathbf{z}) = \mathbf{z} - \mathcal{D}_{\mathcal{E}}^\mathrm{(Lips)}(\mathbf{z}),
\label{eq:ReMLipsAME}
\end{equation}
where ``ReM-'' denotes residual mapping.
It admits another formulation by rewriting its amplitude-modifying part as
\begin{align}
    \mathcal{A}_{\mathcal{E}}^{\mathrm{(ReM\text{-}Lips)}}(\mathbf{x}) & = (\mathbf{x} - (\mathcal{E}(\mathbf{x}))_+)_+.
    \label{eq:remimplement}
\end{align}
That is, ReM-LipsAM-E can be implemented by applying ReLU to the estimated residual component $\mathcal{E}(\mathbf{x})$.
The illustration in Table~\ref{fig:architectures} is based on the formulation in Eq.~\eqref{eq:remimplement}.

To use LipsAM-M in a residual form, we propose \textbf{ReM-LipsAM-M} defined as follows:
\begin{align}
&\text{ReM-LipsAM-M}:{}
\mathcal{D}_{(\mathcal{M},g,a,b)}^{\mathrm{(ReM\text{-}Lips)}}(\mathbf{z})
= \nonumber \\
&\qquad\qquad
\mathbf{z}
-\underset{\mathcal{R}(\mathbf{z})}{\underbrace{
G\Bigl(-\max(\mathcal{M}(|\mathbf{z}|),\, a|\mathbf{z}| + b \mathbf{1})\Bigr)\odot \mathbf{z}
}}.
\label{eq:R-LipsAM-M}
\end{align}
The reason for the sign inversion before applying $g$ is that a commonly used $g$ is monotonically increasing.
Without the sign inversion, large values of $|z_n|$ lead to large mask values through the path $a|z_n|+b$.
This is undesirable because the noise, which we want to estimate via $\mathcal{R}$, is typically smaller than the desired signal.
The sign inversion reverses this behavior, making the mask smaller for large elements and larger for small ones.

\begin{rem}
\label{rem:maskcoincides}
When $g$ satisfies $1-g(-t)=g(t)$, e.g., the sigmoid function, ReM-LipsAM-M coincides with LipsAM-M.
\end{rem}

\section{Evaluation of Lipschitz Constants of LipsAMs}
\label{sec:bounds}

This section provides tools for evaluating the Lipschitz constants of LipsAM architectures (Ingredient~B in Fig.~\ref{fig:diagram}).

The following proposition shows that the Lipschitz constant of a LipsAM $\mathcal{D}_{\!\mathcal{A}}$ coincides with that of its amplitude-modifying part $\mathcal{A}$, and the subsequent analysis accordingly focuses on the evaluation of $\operatorname{Lip}(\mathcal{A})$ for each architecture.
\begin{prop}
\label{prop:lip_equal}
Let $\mathcal{D}_{\!\mathcal{A}}:\mathbb{C}^N\to\mathbb{C}^N$ be a LipsAM given by
$\mathcal{D}_{\!\mathcal{A}}(\mathbf{z})
=\mathcal{A}(|\mathbf{z}|)\odot\operatorname{sign}(\mathbf{z})$,
where $\mathcal{A}:\mathbb{R}_+^N\to\mathbb{R}_+^N$.
Then,
\begin{equation}
\operatorname{Lip}(\mathcal{D}_{\!\mathcal{A}})
=
\operatorname{Lip}(\mathcal{A}).
\end{equation}
\end{prop}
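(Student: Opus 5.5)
The plan is to prove the two inequalities $\operatorname{Lip}(\mathcal{D}_{\!\mathcal{A}})\ge\operatorname{Lip}(\mathcal{A})$ and $\operatorname{Lip}(\mathcal{D}_{\!\mathcal{A}})\le\operatorname{Lip}(\mathcal{A})$ separately. Throughout, write $L=\operatorname{Lip}(\mathcal{A})$, which is finite by Definition~\ref{def:lipsam}~(i). By Theorem~\ref{prop:zero_preserving}, $\mathcal{A}$ also has the zero-preserving property in Eq.~\eqref{eq:zero_preserving}.

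\textbf{Lower bound.} We restrict $\mathcal{D}_{\!\mathcal{A}}$ to real nonnegative inputs $\mathbf{z}=\mathbf{x}\in\mathbb{R}_+^N\subset\mathbb{C}^N$. For such inputs, $(\operatorname{sign}(\mathbf{x}))_n=1$ when $x_n>0$. When $x_n=0$, both $(\operatorname{sign}(\mathbf{x}))_n=0$ and $(\mathcal{A}(\mathbf{x}))_n=0$ by zero preservation. Hence $\mathcal{D}_{\!\mathcal{A}}(\mathbf{x})=\mathcal{A}(\mathbf{x})$ on $\mathbb{R}_+^N$. Taking the supremum of the difference quotients over this subset gives $\operatorname{Lip}(\mathcal{D}_{\!\mathcal{A}})\ge L$.

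\textbf{Upper bound.} Fix $\mathbf{z},\mathbf{w}\in\mathbb{C}^N$ and set $a_n=(\mathcal{A}(|\mathbf{z}|))_n$, $b_n=(\mathcal{A}(|\mathbf{w}|))_n$, $u_n=(\operatorname{sign}(\mathbf{z}))_n$ and $v_n=(\operatorname{sign}(\mathbf{w}))_n$. The key step is the coordinatewise identity
\begin{equation*}
|a_nu_n-b_nv_n|^2=(a_n-b_n)^2+a_nb_n|u_n-v_n|^2 .
\end{equation*}
When $|u_n|=|v_n|=1$ this follows from expanding $|u_n-v_n|^2=2-2\,\mathrm{Re}(u_n\overline{v_n})$. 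When, say, $z_n=0$, zero preservation gives $a_n=0$, and both sides reduce to $b_n^2$. The same identity applied to the inputs themselves gives
\begin{equation*}
|z_n-w_n|^2=(|z_n|-|w_n|)^2+|z_n||w_n|\,|u_n-v_n|^2 .
\end{equation*}
Summing over $n$, we obtain $\|\mathcal{D}_{\!\mathcal{A}}(\mathbf{z})-\mathcal{D}_{\!\mathcal{A}}(\mathbf{w})\|_2^2=\|\mathcal{A}(|\mathbf{z}|)-\mathcal{A}(|\mathbf{w}|)\|_2^2+\sum_n a_nb_n|u_n-v_n|^2$. The first term is at most $L^2\||\mathbf{z}|-|\mathbf{w}|\|_2^2$ by the Lipschitz continuity of $\mathcal{A}$.

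It therefore suffices to show $a_nb_n\le L^2|z_n||w_n|$, which follows from $0\le(\mathcal{A}(\mathbf{x}))_n\le L\,x_n$ for every $\mathbf{x}\in\mathbb{R}_+^N$. This is the main point: we need condition~(ii) of Definition~\ref{def:lipsam} with the sharp constant $L_2=L$, not just some unrelated $L_2$. To get it, let $\mathbf{y}$ be $\mathbf{x}$ with its $n$th entry set to $0$. Zero preservation gives $(\mathcal{A}(\mathbf{y}))_n=0$, so
\begin{equation*}
(\mathcal{A}(\mathbf{x}))_n\le\|\mathcal{A}(\mathbf{x})-\mathcal{A}(\mathbf{y})\|_2\le L\|\mathbf{x}-\mathbf{y}\|_2=L\,x_n .
\end{equation*}
Combining the two terms gives $\|\mathcal{D}_{\!\mathcal{A}}(\mathbf{z})-\mathcal{D}_{\!\mathcal{A}}(\mathbf{w})\|_2^2\le L^2\|\mathbf{z}-\mathbf{w}\|_2^2$. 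Hence $\operatorname{Lip}(\mathcal{D}_{\!\mathcal{A}})\le L$, which together with the lower bound gives equality.
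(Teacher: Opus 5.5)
Your proof is correct and follows the paper's argument. The lower bound comes from restricting to $\mathbb{R}_+^N$. The upper bound combines the coordinatewise expansion from the sufficiency part of Theorem~\ref{thm:lipsam} (your $|u_n-v_n|^2$ is the paper's $2(1-\cos(\varphi_n-\phi_n))$) with the sharp constant $L_2=L_1$ obtained exactly as at the end of the proof of Theorem~\ref{prop:zero_preserving}. The only difference is that you write these steps out inline and handle zero coordinates explicitly via zero preservation, which the paper leaves implicit.
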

\begin{proof}
See Appendix~\ref{app:proof_prop_lip_equal}.
\end{proof}

As in Eq.~\eqref{eq:lip_jacobian}, computing a Lipschitz constant requires the evaluation of the supremum of the operator norm of its Jacobian (strictly speaking, Clarke's generalized Jacobian) over a potentially high-dimensional search space.
To make this computation tractable, we propose a two-step framework for efficiently evaluating the Lipschitz constants of arbitrary LipsAM architectures as follows:
\begin{enumerate}
    \item Represent a LipsAM in the form of Eq.~\eqref{eq:generalLipsAM}, where
    an bivariate function $\psi$ characterizes an architecture.
    \item Solve the seven-variable optimization problem associated with the function $\psi$ in Eq.~\eqref{eq:worst_case_lipschitz_bound_2d}.
\end{enumerate}
The second step is enabled by Theorem~\ref{thm:worst_case_lipschitz_bound}, which reduces the search space for evaluating the Lipschitz constant to a lower-dimensional one.
Then, we evaluate the Lipschitz constants of the proposed LipsAM architectures, deriving closed-form formulas for the estimator-based architectures and numerically computing them for the masking-based architectures.

\subsection{General Representation of AM Architectures}
\label{sec:generalRep}

We introduce the following general representation that characterizes an AM architecture by a bivariate function $\psi:\mathbb{R}\times\mathbb{R}_+\to \mathbb{R}_+$, which we use for the following analysis.
\begin{equation}
    \text{General AM:}\quad
    \mathcal{D}_{(\mathcal{L},\psi)}(\mathbf{z}) = \underset{=\mathcal{A}_{(\mathcal{L},\psi)}(|\mathbf{z}|)}{\underbrace{\Psi(\mathcal{L}(|\mathbf{z}|),|\mathbf{z}|)}}\odot\operatorname{sign}(\mathbf{z}),
\label{eq:generalLipsAM}
\end{equation}
where $\mathcal{L}:\mathbb{R}_+^N\to\mathbb{R}^N$ is the learnable mapping (i.e., $\mathcal{E}$ and $\mathcal{M}$ in estimator- and masking-based AM, respectively), and $\Psi:\mathbb{R}^N\times \mathbb{R}_+^N\to \mathbb{R}_+^N$ is a function that applies the bivariate function $\psi$ element-wise, given by $\Psi(\mathbf{u},\mathbf{v}) = (\psi(u_n,v_n))_{n=1}^N$.
Specifically, LipsAM-E, ReM-LipsAM-E, and LipsAM-M can be represented in this form by setting $\psi$ as follows%
\footnote{
We omit ReM-LipsAM-M ($\psi(u,v)=(1-g(-\max(u,a v + b)))\cdot v$) from the analysis in this section since we use $g=\operatorname{sigmoid}$ in this paper, and in this case ReM-LipsAM-M coincides with LipsAM-M (see Remark~\ref{rem:maskcoincides}).
}:
\begin{align}
\text{LipsAM-E:}\quad & \psi(u,v) = (\min(u,v))_+, \\
\text{ReM-LipsAM-E:}\quad & \psi(u,v) = (v-(u)_+)_+, \label{eq:phiREME}\\
\text{LipsAM-M:}\quad & \psi(u,v) = g(\max(u,a v + b))\cdot v.
\label{eq:phiM}
\end{align}
The general forms of AMs are displayed in the rightmost column of Table~\ref{fig:architectures}. 
This general form of AM satisfies the condition of LipsAM when $\mathcal{L}$ and $\psi$ are Lipschitz continuous and the zero-preserving property for $\psi$ is satisfied, as follows.

\begin{thm}
\label{thm:generalLipsAM}
Let $\mathcal{L}:\mathbb{R}_+^N\to\mathbb{R}^N$ and $\psi:\mathbb{R}\times\mathbb{R}_+\to \mathbb{R}_+$ be Lipschitz continuous and assume that $\psi(u,0) = 0$ for all $u\in\mathbb{R}$.
Then, $\mathcal{D}_{(\mathcal{L},\psi)}$ in Eq.~\eqref{eq:generalLipsAM} is a LipsAM.
\end{thm}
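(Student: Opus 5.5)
The plan is to apply Theorem~\ref{prop:zero_preserving} to the amplitude-modifying part $\mathcal{A}_{(\mathcal{L},\psi)}(\mathbf{x})=\Psi(\mathcal{L}(\mathbf{x}),\mathbf{x})$. That result reduces the claim to two properties of $\mathcal{A}_{(\mathcal{L},\psi)}$. It must be Lipschitz continuous on $\mathbb{R}_+^N$, and it must have the zero-preserving property in Eq.~\eqref{eq:zero_preserving}. First, I note that $\mathcal{A}_{(\mathcal{L},\psi)}$ indeed maps $\mathbb{R}_+^N$ into $\mathbb{R}_+^N$, since $\psi$ takes values in $\mathbb{R}_+$. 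Hence $\mathcal{D}_{(\mathcal{L},\psi)}$ is an AM in the sense required by the theorem.

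Zero preservation is immediate. If $x_n=0$, then $(\mathcal{A}_{(\mathcal{L},\psi)}(\mathbf{x}))_n=\psi((\mathcal{L}(\mathbf{x}))_n,0)=0$ by the assumption $\psi(u,0)=0$.

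For Lipschitz continuity, let $L_\psi=\operatorname{Lip}(\psi)$, with $\mathbb{R}\times\mathbb{R}_+\subset\mathbb{R}^2$ carrying the Euclidean norm, and let $L_{\mathcal{L}}=\operatorname{Lip}(\mathcal{L})$. For $\mathbf{x},\mathbf{y}\in\mathbb{R}_+^N$, write $\mathbf{u}=\mathcal{L}(\mathbf{x})$ and $\mathbf{u}'=\mathcal{L}(\mathbf{y})$. Bounding each component gives
\begin{equation*}
\|\Psi(\mathbf{u},\mathbf{x})-\Psi(\mathbf{u}',\mathbf{y})\|_2^2=\sum_{n=1}^N|\psi(u_n,x_n)-\psi(u_n',y_n)|^2\le L_\psi^2\sum_{n=1}^N\bigl(|u_n-u_n'|^2+|x_n-y_n|^2\bigr).
\end{equation*}
The right-hand side equals $L_\psi^2(\|\mathbf{u}-\mathbf{u}'\|_2^2+\|\mathbf{x}-\mathbf{y}\|_2^2)$, which is at most $L_\psi^2(L_{\mathcal{L}}^2+1)\|\mathbf{x}-\mathbf{y}\|_2^2$. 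Therefore $\mathcal{A}_{(\mathcal{L},\psi)}$ is $L_\psi\sqrt{L_{\mathcal{L}}^2+1}$-Lipschitz. Equivalently, $\Psi$ is $L_\psi$-Lipschitz on $\mathbb{R}^N\times\mathbb{R}_+^N$, and $\mathbf{x}\mapsto(\mathcal{L}(\mathbf{x}),\mathbf{x})$ is $\sqrt{L_{\mathcal{L}}^2+1}$-Lipschitz, so the composition rule from Sect.~\ref{sec:preliminary} applies.

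With both properties established, Theorem~\ref{prop:zero_preserving} shows that $\mathcal{D}_{(\mathcal{L},\psi)}$ is a LipsAM. One could instead verify Definition~\ref{def:lipsam} directly. Condition~(ii) would then follow from $\psi(u,v)=|\psi(u,v)-\psi(u,0)|\le L_\psi v$, which gives $L_2=L_\psi$. The only point needing care is the componentwise step. There one must check that the elementwise Lipschitz bound for the bivariate $\psi$ sums correctly into a bound on the joint vector $(\mathbf{u},\mathbf{x})$. This holds because each term involves only the $n$-th coordinates, so I expect no real obstacle.
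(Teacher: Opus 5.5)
Your proposal is correct and follows the paper's proof: both check that $\mathcal{A}_{(\mathcal{L},\psi)}$ is zero-preserving via $\psi(u,0)=0$ and Lipschitz continuous, then invoke Theorem~\ref{prop:zero_preserving}. Your componentwise estimate with constant $\operatorname{Lip}(\psi)\sqrt{(\operatorname{Lip}(\mathcal{L}))^2+1}$ fills in the Lipschitz step that the paper only asserts, and it coincides with Proposition~\ref{prop:bound} in the appendix.
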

\begin{proof}
$\mathcal{A}_{(\mathcal{L},\psi)}$ in Eq.~\eqref{eq:generalLipsAM} with such $\mathcal{L}$ and $\psi$ is Lipschitz continuous and preserves zeros, 
i.e., $(\mathcal{A}_{(\mathcal{L},\psi)}(\mathbf{x}))_n=0$ whenever $x_n=0$. Thus Theorem~\ref{prop:zero_preserving} applies.
\end{proof}

\subsection{Evaluating Lipschitz Constants in a Reduced Search Space}
\label{sec:framework}

Now we aim to obtain an upper bound on the Lipschitz constant for a LipsAM architecture characterized by $\psi$.
More precisely, we assume that the Lipschitz constant of the learnable mapping $\mathcal{L}:\mathbb{R}_+^N\to\mathbb{R}^N$ is upper bounded by a known value $L\geq0$, i.e., $\operatorname{Lip}(\mathcal{L})\leq L$.
We then seek the tightest upper bound on the Lipschitz constant of LipsAM,

\begin{equation}
B_{(L,\psi)}^{(N)} = \sup_{\mathcal{L}:\mathbb{R}_+^N\to\mathbb{R}^N:\operatorname{Lip}(\mathcal{L})\le L}\operatorname{Lip}\bigl(\mathcal{A}_{(\mathcal{L},\psi)}\bigr).
\label{eq:objective}
\end{equation}
A naive approach to computing this quantity is to maximize the operator norm of the Jacobian of $\mathcal{A}_{(\mathcal{L},\psi)}$, specifically,
\begin{equation}
\!\!\!B_{(L,\psi)}^{(N)} =
    \sup_{\mathcal{L}:\mathbb{R}_+^N\to\mathbb{R}^N:\operatorname{Lip}(\mathcal{L})\le L}\!
\left(\sup_{\mathbf{x}\in\mathbb{R}_+^N}
\|
\underset{\mathbf{J}_{\mathcal{A}_{(\mathcal{L},\psi)}}(\mathbf{x})}{\underbrace{\mathbf{D}_1 \mathbf{J}_{\mathcal{L}} + \mathbf{D}_2}}
\|_{\mathrm{op}}
\right)\!,\!
\label{eq:arbitrary_N}
\end{equation}
where $\mathbf{J}_{\mathcal{L}}\in\mathbb{R}^{N\times N}$ is the Jacobian matrix of $\mathcal{L}$ at $\mathbf{x}\in\mathbb{R}_+^N$, $\mathbf{D}_1\in\mathbb{R}^{N\times N}$ and $\mathbf{D}_2\in\mathbb{R}^{N\times N}$ are diagonal matrices defined as $(\mathbf{D}_1)_{nn}=\partial_{1}\psi((\mathcal{L}(\mathbf{x}))_n,x_n)$ and $(\mathbf{D}_2)_{nn}=\partial_{2}\psi((\mathcal{L}(\mathbf{x}))_n,x_n)$, and $\partial_{1}$ and $\partial_{2}$ denote the partial derivatives with respect to the first and second arguments, respectively%
\footnote{
By Rademacher's theorem, the derivative-based notations remain valid almost everywhere even when $\psi$ is not globally differentiable.
}.

However, directly evaluating Eq.~\eqref{eq:arbitrary_N} is computationally demanding, particularly when the input dimension $N\in\mathbb{N}$ is large because the Jacobian matrix is of size $N\times N$.
The dimension of the search space for $\mathcal{L}$ and $\mathbf{x}$ also rapidly increases with $N$.
To reduce this search space, we investigate whether the evaluation can be reduced to a lower-dimensional problem.
The following theorem shows that this is indeed possible, as the tightest upper bound is attained at $N=2$.

\begin{thm}[Tight global upper bound of Lipschitz constant of LipsAM]
\label{thm:worst_case_lipschitz_bound}
Let $\psi:\mathbb{R}\times\mathbb{R}_+\to\mathbb{R}_+$ be Lipschitz continuous and satisfy $\psi(\cdot,0)=0$.
Let $L\geq0$, and $B_{(L,\psi)}$ be the supremum of $B_{(L,\psi)}^{(N)}$ in Eq.~\eqref{eq:objective} over $N\in\mathbb{N}$. 
Then, this value is attained at $N=2$, i.e.,
\begin{equation}
B_{(L,\psi)}=\sup_{N\in\mathbb{N}} B_{(L,\psi)}^{(N)}
=
B_{(L,\psi)}^{(2)},
\label{eq:reduce2}
\end{equation}
and, for any $L$-Lipschitz continuous mapping $\mathcal{L}:\mathbb{R}_+^N\to\mathbb{R}^N$,
\begin{equation}
\operatorname{Lip}(\mathcal{D}_{(\mathcal{L},\psi)})
\leq
B_{(L,\psi)}.
\label{eq:tightbound}
\end{equation}
\begin{proof}
See Appendix~\ref{proof:thm_wb}.
\end{proof}
\end{thm}

\begin{rem}
As in Appendix~\ref{sec:loose}, a dimension-independent upper bound is readily given by 
$\operatorname{Lip}(\psi)\cdot\sqrt{(\operatorname{Lip}(\mathcal{L}))^2+1}\geq\operatorname{Lip}(\mathcal{D}_{(\mathcal{L},\psi)})$, 
but this bound is generally not tight.
Since the bound is used for normalization in Eq.~\eqref{eq:CRLipsAM}, its overestimation may unnecessarily restrict the expressive power of DNNs.
\end{rem}

From Theorem~\ref{thm:worst_case_lipschitz_bound}, the upper bound $B_{(L,\psi)}$ in Eq.~\eqref{eq:tightbound} can be computed via the case $N=2$.
This allows its efficient numerical estimation when a closed-form bound is not available.
For $N=2$, the diagonal matrices in Eq.~\eqref{eq:arbitrary_N} can be expressed using four parameters, $u_1,u_2\in\mathbb{R}$ and $v_1,v_2\in\mathbb{R}_+$:
\begin{align}
\mathbf{D}_1^{(u_1,u_2,v_1,v_2)}
&=
\mathrm{diag}(\partial_1\psi(u_1,v_1),\partial_1\psi(u_2,v_2)),\\
\mathbf{D}_2^{(u_1,u_2,v_1,v_2)}
&=
\mathrm{diag}(\partial_2\psi(u_1,v_1),\partial_2\psi(u_2,v_2)).
\end{align}
Moreover, since $\mathcal{L}:\mathbb{R}_+^2\to\mathbb{R}^2$ is $L$-Lipschitz continuous for some $L\geq0$, the singular values of the Jacobian matrix $\mathbf{J}_{\mathcal{L}}$ are at most $L$.
Thus, this matrix can be reparametrized as
\begin{align*}
\mathbf{J}_{\mathcal{L}}^{(\varphi_1,\varphi_2,\eta)}
=
\begin{pmatrix}
\cos\varphi_1\!\! & \!\!-\sin\varphi_1 \\
\sin\varphi_1\!\! & \!\!\cos\varphi_1
\end{pmatrix}
\begin{pmatrix}
L\!\!\! & \!\!0 \\
0\!\!\! & \!\!\eta L
\end{pmatrix}
\begin{pmatrix}
\cos\varphi_2\!\! & \!\!-\sin\varphi_2 \\
\sin\varphi_2\!\! & \!\!\cos\varphi_2
\end{pmatrix}\!,
\end{align*}
where $\varphi_1,\varphi_2\in[0,2\pi)$ and $\eta\in[-1,1]$ are the additional three parameters.
Substituting these representations into Eq.~\eqref{eq:arbitrary_N} and collecting the seven parameters as $\boldsymbol{\zeta}=(u_1,u_2,v_1,v_2,\varphi_1,\varphi_2,\eta)$, the bound is obtained by evaluating%
\footnote{
We formulate the optimization problem using $\sup$ rather than $\max$ since the optimal value may not be attained within a bounded set.
For example, the softplus function $f(x)=\log(1+e^x)$ is 1-Lipschitz continuous, while its derivative, i.e., the sigmoid function, approaches $1$ only in the limit $x\to+\infty$.
This technical issue does not arise for the LipsAMs considered in this paper.
}
\begin{align}
\!\!B_{(L,\psi)}=\sup_{\boldsymbol{\zeta}\in\mathbb{R}^2\times\mathbb{R}_+^2\times[0,2\pi)^2\times[-1,1]}\|\mathbf{D}_1^{(\boldsymbol{\zeta})}\mathbf{J}_{\mathcal{L}}^{(\boldsymbol{\zeta})}\!+\mathbf{D}_2^{(\boldsymbol{\zeta})}\|_{\mathrm{op}},
\label{eq:worst_case_lipschitz_bound_2d}
\end{align}
where, for notational simplicity, $\mathbf{D}_1^{(\boldsymbol{\zeta})}$, $\mathbf{D}_2^{(\boldsymbol{\zeta})}$, and $\mathbf{J}_{\mathcal{L}}^{(\boldsymbol{\zeta})}$ denote the above parametrizations using the corresponding elements of $\boldsymbol{\zeta}$.
This quantity can be numerically estimated using an optimizer, such as a particle swarm solver \cite{kennedyParticleSwarmOptimization}.
Since Eq.~\eqref{eq:worst_case_lipschitz_bound_2d} is generally nonconvex and a solver may yield a suboptimal solution, we recommend running the solver multiple times and taking the maximum value.
For additional confidence, this value can be examined as in Sect.~\ref{sec:verify}.

\subsection{Lipschitz Bounds of the Proposed Architectures}
\label{sec:lips_bound_se_re}

We now evaluate the Lipschitz bounds of the proposed LipsAM architectures.
The results are summarized in Table~\ref{fig:architectures}.
For the estimator-based architectures (i.e., LipsAM-E and ReM-LipsAM-E), we derived closed-form formulas as follows.
\begin{prop}
For any $N\geq 2$, 
\label{thm:lips_bound_se_re}
\begin{align}
&\sup_{\mathcal{E}:\mathbb{R}_+^N\to\mathbb{R}^N,\,\operatorname{Lip}(\mathcal{E})\leq L}
\operatorname{Lip}(\mathcal{D}_{\mathcal{E}}^\mathrm{(Lips)})
=
\sqrt{L^2+1},
\\
&\sup_{\mathcal{E}:\mathbb{R}_+^N\to\mathbb{R}^N,\,\operatorname{Lip}(\mathcal{E})\leq L}
\operatorname{Lip}(\mathcal{D}_{\mathcal{E}}^\mathrm{(ReM\text{-}Lips)})
=
L+1.
\end{align}
\end{prop}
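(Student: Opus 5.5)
By Proposition~\ref{prop:lip_equal}, it suffices to bound $\operatorname{Lip}(\mathcal{A})$ for the amplitude-modifying parts. Both architectures are in the general form of Eq.~\eqref{eq:generalLipsAM}, with $\psi(u,v)=(\min(u,v))_+$ and $\psi(u,v)=(v-(u)_+)_+$ respectively, and $\mathcal{L}=\mathcal{E}$. For each architecture the plan has two parts: an upper bound valid for every $N$, and a lower bound from an explicit $\mathcal{E}$ for each $N\ge2$. For the upper bound I will not go through Eq.~\eqref{eq:worst_case_lipschitz_bound_2d}. Instead I will bound the Jacobian in Eq.~\eqref{eq:arbitrary_N} directly, using the fact that the partial derivatives of these piecewise-linear $\psi$ have a special structure almost everywhere.

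\textbf{Upper bound for LipsAM-E.} Almost everywhere, each coordinate $n$ lies in one of three regimes:
\begin{itemize}
\item $(\partial_1\psi,\partial_2\psi)=(1,0)$, when $0<u_n<v_n$;
\item $(0,1)$, when $0<v_n<u_n$;
\item $(0,0)$, when the minimum is negative.
\end{itemize}
So $\mathbf{D}_1$ and $\mathbf{D}_2$ are diagonal $0/1$ matrices with disjoint supports, i.e.\ projections onto complementary coordinate sets $S_1$ and $S_2$. For $\mathbf{h}\in\mathbb{R}^N$,
\[
\|(\mathbf{D}_1\mathbf{J}_{\mathcal{E}}+\mathbf{D}_2)\mathbf{h}\|_2^2=\|P_{S_1}\mathbf{J}_{\mathcal{E}}\mathbf{h}\|_2^2+\|P_{S_2}\mathbf{h}\|_2^2\le L^2\|\mathbf{h}\|_2^2+\|\mathbf{h}\|_2^2 .
\]
This gives $\sqrt{L^2+1}$. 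To make this rigorous without Jacobians, I would argue directly with two points $\mathbf{x},\mathbf{y}$. The map $(u,v)\mapsto(\min(u,v))_+$ satisfies $|\psi(u,v)-\psi(u',v')|\le\max(|u-u'|,|v-v'|)$, so
\[
\|\mathcal{A}(\mathbf{x})-\mathcal{A}(\mathbf{y})\|_2^2\le\sum_n\max\bigl(|\Delta\mathcal{E}_n|^2,|\Delta x_n|^2\bigr)\le\|\Delta\mathcal{E}\|_2^2+\|\Delta\mathbf{x}\|_2^2\le(L^2+1)\|\Delta\mathbf{x}\|_2^2 .
\]
This pointwise max-inequality is the clean route, and I will use it.

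\textbf{Upper bound for ReM-LipsAM-E.} Here $\psi(u,v)=(v-(u)_+)_+$ satisfies $|\Delta\psi|\le|\Delta u|+|\Delta v|$. Applying the triangle inequality in $\ell_2$ gives
\[
\|\Delta\mathcal{A}\|_2\le\|\Delta\mathcal{E}\|_2+\|\Delta\mathbf{x}\|_2\le(L+1)\|\Delta\mathbf{x}\|_2 .
\]

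\textbf{Lower bounds (attainment).} I need explicit $L$-Lipschitz maps $\mathcal{E}$ on $\mathbb{R}_+^N$ with $N\ge2$ whose LipsAM reaches the bound; extra coordinates can be handled by setting $\mathcal{E}_n\equiv0$. This construction is the main obstacle, because the regimes must hold on an open set and the Jacobian must align so the norms add rather than merely bound.

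For LipsAM-E, take $\mathcal{E}_1(\mathbf{x})=c+Lx_2$ with a large constant $c$, so that $\min$ selects $x_1$ and coordinate 1 is in the identity regime. Take $\mathcal{E}_2(\mathbf{x})=Lx_1$ near a point with $x_2$ large, so that coordinate 2 is in the $\mathcal{E}$-regime. The Jacobian is then $\begin{pmatrix}1&0\\ L&0\end{pmatrix}$, whose operator norm is $\sqrt{1+L^2}$, as needed. Because this is a global affine $\mathcal{E}$ with $\|\mathbf{J}\|_{\mathrm{op}}=L$, there is no problem with Lipschitz continuity.

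For ReM-LipsAM-E, I want $\mathbf{J}_{\mathcal{A}}=\mathbf{I}-\mathbf{J}_{\mathcal{E}}$ with $\mathbf{J}_{\mathcal{E}}=-L\mathbf{I}$, which has norm $L+1$. But $\mathcal{E}=c-L\mathbf{x}$ must remain positive while $\mathbf{x}-\mathcal{E}(\mathbf{x})>0$. Both conditions can be met locally: choose $c$ and $\mathbf{x}$ with $x_n<c/L$ and $x_n(1+L)>c$. Since the Lipschitz constant is a supremum, a local region suffices.

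That $N\ge2$ is needed (the LipsAM-E construction uses cross-coupling) is consistent with Theorem~\ref{thm:worst_case_lipschitz_bound}.
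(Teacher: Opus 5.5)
Your proof is correct and follows the paper's argument. The upper bounds match the paper's: your coordinatewise $\ell_\infty$ estimate is Proposition~\ref{prop:bound} with $\operatorname{Lip}(\psi)=1$ written out, and the ReM case is the same triangle inequality. Your attaining examples are also the paper's, with two cosmetic differences: for LipsAM-E you swap which coordinate is in the $\mathcal{E}$-regime, and for ReM-LipsAM-E you use the identical choice $\mathcal{E}(\mathbf{x})=c\mathbf{1}-L\mathbf{x}$ on $c/(L+1)<x_n<c/L$. When $L=0$ your constructions still give norm $1$ if the condition $x_n<c/L$ is read as vacuous, whereas the paper treats that case separately with a positive constant $\mathcal{E}$.
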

\begin{proof}
See Appendix~\ref{app:proof_thm_lips_bound_se_re}.
\end{proof}
\noindent 
For LipsAM-M, the nonlinear function $g$ (e.g., the sigmoid function) makes it difficult to obtain a closed-form formula.
We therefore apply the proposed two-step framework by setting $\psi$ as in Eq.~\eqref{eq:phiM} and computing Eq.~\eqref{eq:worst_case_lipschitz_bound_2d} numerically.

\subsection{Empirical Confirmation of the Lipschitz Bounds}
\label{sec:verify}

We numerically confirmed Theorem~\ref{thm:worst_case_lipschitz_bound} and the Lipschitz bounds obtained above.
We naively estimated the maximum Lipschitz constants of each LipsAM architecture by computing the following quantity using the Adam optimizer:
\begin{equation}
\sup_{\mathbf{z} \in \mathbb{C}^N, \boldsymbol{\theta} \in \Theta}\|\mathbf{J}_\mathcal{D}(\mathbf{z};\boldsymbol{\theta})\|_\mathrm{op},
\label{eq:naivebound}
\end{equation}
where $\mathcal{D}:\mathbb{C}^N\to\mathbb{C}^N$ represents LipsAMs (i.e., $\mathcal{D}\in\{\mathcal{D}_\mathcal{E}^\mathrm{(Lips)}, \mathcal{D}_\mathcal{E}^\mathrm{({ReM\text{-}Lips})}, \mathcal{D}_{(\mathcal{M},g,a,b)}^\mathrm{(Lips)}\}$), and $\boldsymbol{\theta} \in \Theta$ denotes the parameters of the learnable mapping $\mathcal{L}$ (i.e., $\mathcal{E}$ and $\mathcal{M}$).
To control the Lipschitz constant of the learnable mapping $\mathcal{L}$, we implemented it with scaling by a factor of $L\in\{1/4,1/2,1,2,4\}$ and an \textit{almost orthogonal layer}\cite{prachAlmostOrthogonalLayersEfficient2022} as $\mathcal{L}(\mathbf{x})=(\mathbf{T}(L\mathbf{x})+\mathbf{b})$, where $\|\mathbf{T}\|_\mathrm{op}\leq 1$ and $\mathbf{b}\in\mathbb{R}^N$, resulting in $\operatorname{Lip}(\mathcal{L})\leq L$.
The dimensions were set to $N\in\{1,2,3,4,5\}$.
We generated 100 random initial-value candidates and selected the top 10 that yielded the largest values of $\|\mathbf{J}_\mathcal{D}(\mathbf{z};\boldsymbol{\theta})\|_\mathrm{op}$.
The learning rate of the Adam optimizer was set to $0.01$, and the optimization was run for a maximum of $500$ iterations.
To utilize automatic differentiation, a power iteration method with $5$ iterations was used to calculate the operator norm during the Adam optimization process.
After the optimization concluded, the maximum singular value was computed using the \texttt{svd} function in MATLAB.

Figure~\ref{fig:verification} shows the results.
The results in Sect.~\ref{sec:lips_bound_se_re} (solid lines) tightly bounded their naive estimates via Eq.~\eqref{eq:naivebound} (markers). 
The naive estimates achieved the bound when $N=2$ (circles) in many cases, supporting Theorem~\ref{thm:worst_case_lipschitz_bound}.
For $N=1$ (squares), the naive estimates for LipsAM-E and ReM-LipsAM-E remained at $1$ and failed to attain their analytical bounds, which arises from the piecewise-linear structure of the architectures used in this experiment, for which the objective function in Eq.~\eqref{eq:naivebound} becomes locally constant, causing the gradients to vanish and preventing maximization by Adam.
Note also that the Lipschitz constant of a trained LipsAM can be smaller than the bounds displayed by the solid lines.

\begin{figure}
    \centering
    \centering
    \centering
    \includegraphics[scale=0.5]{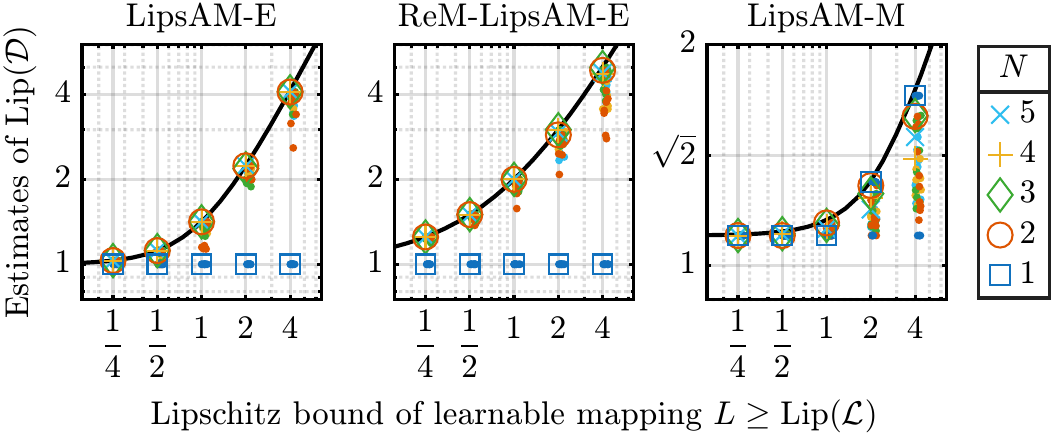}
    \caption{%
    Confirmation of the Lipschitz bounds of the proposed LipsAM architectures.
    The markers show the naive estimates of the maximum Lipschitz constant obtained by Eq.~\eqref{eq:naivebound} and the Adam optimizer, where $N$ corresponds to the dimension of the inputs, small jittered dots show the results for the 10 initial values, and large markers show the maximum.
    The solid lines show the upper bounds obtained in Sect.~\ref{sec:lips_bound_se_re}. 
    The bounds for LipsAM-E and ReM-LipsAM-E are analytical (see Proposition~\ref{thm:lips_bound_se_re}), whereas that for LipsAM-M ($g=\mathrm{sigmoid}$, $a=1$, and $b=0$) was numerically estimated using the proposed framework in Sect.~\ref{sec:bounds}; We used the \texttt{particleswarm} function \cite{kennedyParticleSwarmOptimization} in MATLAB's Global Optimization Toolbox to solve Eq.~\eqref{eq:worst_case_lipschitz_bound_2d}. 
    For each setting, we ran \texttt{particleswarm} 10 times and took the maximum.
    }
    \label{fig:verification}
\end{figure}

\section{Convergent PnP Audio Signal Recovery as an Application of LipsAM}
\label{sec:algo}

As an application of the proposed LipsAMs, we construct a provably convergent PnP algorithm for audio signal recovery.
While various convergence guarantees have been developed for PnP algorithms, we focus on the result in \cite{ryuPlugandPlayMethodsProvably2019}.
We selected this algorithm because, in our preliminary experiments on speech dereverberation, this algorithm achieved favorable signal recovery performance.
Note that the applicability of LipsAMs is not limited to this algorithm, and they can be used to satisfy other convergence conditions.

\subsection{CoReM-LipsAM and a Convergent PnP Algorithm}

Now we have a Lipschitz-continuous DNN for processing complex-valued signals $\mathcal{D}_{\!\mathcal{A}}:\mathbb{C}^N\to\mathbb{C}^N$ and an upper bound on its Lipschitz constant $B\geq\operatorname{Lip}(\mathcal{D}_{\!\mathcal{A}})$.
By combining them, we define \textbf{CoReM-LipsAM} as follows:
\begin{equation}
    \text{CoReM-LipsAM:}\quad\mathcal{D}_{(\mathcal{A},C)}^{\mathrm{(CoReM\text{-}Lips)}} = \mathrm{Id}-\underset{=\mathcal{R}}{\underbrace{(C/B)\mathcal{D}_{\!\mathcal{A}}^\mathrm{(Lips)}}},
    \label{eq:CRLipsAM}
\end{equation}
where $C\in\mathbb{R}_+$ is a parameter that controls the Lipschitz constant of $\mathcal{R}$.

Using CoReM-LipsAM, we propose a PnP algorithm for audio signal processing based on the PnP-ADMM in Eq.~\eqref{eq:plug-and-play} and the convergence result established in \cite{ryuPlugandPlayMethodsProvably2019}.
We introduce an STFT operator $\mathbf{G}\in\mathbb{C}^{N\times T}$ to transform the time-domain signals into the time-frequency domain.
By plugging the time domain denoiser $\mathcal{D}:\mathbb{R}^T\to\mathbb{R}^T$, given by
\begin{equation}
\mathcal{D}(\mathbf{x})=\mathbf{G}^\mathsf{H}\mathcal{D}_{(\mathcal{A},C)}^{\text{(CoReM-Lips)}}(\mathbf{G}\mathbf{x}),
\end{equation}
into Eq.~\eqref{eq:plug-and-play}, we propose the following algorithm.
\begin{equation}
\label{eq:plug-and-play-LipsAM}
\left\lfloor\quad
\begin{aligned}
\hat{\mathbf{x}}^{[k+1]} &= \mathbf{G}^\mathsf{H}\mathcal{D}_{(\mathcal{A},C)}^{\mathrm{(CoReM\text{-}Lips)}}(\mathbf{G}(\boldsymbol{\xi}^{[k]}-\boldsymbol{\upsilon}^{[k]})),\\
\boldsymbol{\xi}^{[k+1]} &= \mathrm{prox}_{\alpha \mathfrak{F}_\mathbf{y}}(\hat{\mathbf{x}}^{[k+1]}+\boldsymbol{\upsilon}^{[k]}),\\
\boldsymbol{\upsilon}^{[k+1]} &= \boldsymbol{\upsilon}^{[k]}+\hat{\mathbf{x}}^{[k+1]}-\boldsymbol{\xi}^{[k+1]},
\end{aligned}
\right.
\end{equation}

The next theorem guarantees the convergence of this algorithm under the assumption that a Parseval-tight window\cite{janssenCharacterizationComputationCanonical2002}
is used for the STFT operator, i.e., $\mathbf{G}^\mathsf{H}\mathbf{G}=\mathbf{I}$ holds.

\begin{thm}[Convergent PnP using a CoReM-LipsAM]
Let $\mathcal{D}_{(\mathcal{A},C)}^{\mathrm{(CoReM\text{-}Lips)}} = \mathrm{Id}-(C/B)\mathcal{D}_{\!\mathcal{A}}^\mathrm{(Lips)}$, where
$C\in[0,1)$ and $\mathcal{D}_{\!\mathcal{A}}^\mathrm{(Lips)}:\mathbb{C}^N\to\mathbb{C}^N$ is $B$-Lipschitz continuous. 
Let $\mathfrak{F}_\mathbf{y}: \mathbb{R}^T \to \mathbb{R}$ be $\mu$-strongly convex, 
and $\mathbf{G}\in\mathbb{C}^{N\times T}$ and $\alpha>0$ satisfy the following conditions:
\begin{equation}
    \mathbf{G}^\mathsf{H}\mathbf{G}=\mathbf{I},\qquad\frac{C}{\mu(1+C-2C^2)} < \alpha.
\end{equation}
Then the sequence $(\hat{\mathbf{x}}^{[k]})_{k\in\mathbb{N}}$ generated by Eq.~\eqref{eq:plug-and-play-LipsAM} converges to its unique fixed point.
\label{thm:contractionTF}
\end{thm}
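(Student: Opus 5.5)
We will reduce Theorem~\ref{thm:contractionTF} to the convergence result for PnP-ADMM in \cite{ryuPlugandPlayMethodsProvably2019}. That result states: if $\mathfrak{F}_\mathbf{y}$ is $\mu$-strongly convex and the residual $\mathrm{Id}-\mathcal{D}$ of the denoiser is $\varepsilon$-Lipschitz with
\begin{equation*}
\frac{\varepsilon}{\mu(1+\varepsilon-2\varepsilon^2)}<\alpha,
\end{equation*}
then the ADMM iteration map is a contraction. Hence the iterates converge to the unique fixed point. The plan is therefore to show that the time-domain denoiser $\mathcal{D}(\mathbf{x})=\mathbf{G}^\mathsf{H}\mathcal{D}_{(\mathcal{A},C)}^{\mathrm{(CoReM\text{-}Lips)}}(\mathbf{G}\mathbf{x})$ has a residual that is $C$-Lipschitz. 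Once this holds, the stated condition on $\alpha$ is exactly the hypothesis of that result with $\varepsilon=C$.

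\textbf{Step 1: rewrite the residual.} Because $\mathbf{G}^\mathsf{H}\mathbf{G}=\mathbf{I}$,
\begin{equation*}
\mathbf{x}-\mathcal{D}(\mathbf{x})
=\mathbf{G}^\mathsf{H}\mathbf{G}\mathbf{x}-\mathbf{G}^\mathsf{H}\bigl(\mathbf{G}\mathbf{x}-(C/B)\mathcal{D}_{\!\mathcal{A}}^\mathrm{(Lips)}(\mathbf{G}\mathbf{x})\bigr)
=(C/B)\,\mathbf{G}^\mathsf{H}\mathcal{D}_{\!\mathcal{A}}^\mathrm{(Lips)}(\mathbf{G}\mathbf{x}).
\end{equation*}

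\textbf{Step 2: bound its Lipschitz constant.} The identity $\mathbf{G}^\mathsf{H}\mathbf{G}=\mathbf{I}$ makes $\mathbf{G}$ an isometry, so $\|\mathbf{G}\|_{\mathrm{op}}=1$ and also $\|\mathbf{G}^\mathsf{H}\|_{\mathrm{op}}=1$. Using the composition rule from Sect.~\ref{sec:preliminary}, the Lipschitz constant of the residual is at most $(C/B)\cdot 1\cdot B\cdot 1=C<1$.

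\textbf{Step 3: check the real/complex bookkeeping.} This step needs some care. The residual maps $\mathbb{R}^T$ to $\mathbb{C}$-valued vectors through $\mathbf{G}^\mathsf{H}$. The algorithm and the result of \cite{ryuPlugandPlayMethodsProvably2019} live in $\mathbb{R}^T$, so we must ensure $\mathcal{D}$ is real-valued.
\begin{itemize}
\item If $\mathbf{G}$ is the usual STFT with conjugate-symmetric bins, the AM acts element-wise on magnitudes and preserves that symmetry provided the network respects it, so the output is real.
\item Otherwise, we take the real part. The operator norm of $\mathbf{x}\mapsto\operatorname{Re}(\mathbf{G}^\mathsf{H}\mathbf{w})$ is still at most $1$, so the bound is unchanged.
\end{itemize}
In either case the complex Lipschitz property of $\mathcal{D}_{\!\mathcal{A}}^\mathrm{(Lips)}$ is understood via the real identification $\mathbb{C}^N\cong\mathbb{R}^{2N}$, consistent with the earlier convention.

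\textbf{Step 4: invoke the contraction argument.} The remaining work is to recall the argument of \cite{ryuPlugandPlayMethodsProvably2019}. Write $\mathcal{D}=\mathrm{Id}-\mathcal{R}$ with $\operatorname{Lip}(\mathcal{R})\le C$. Then $2\mathcal{D}-\mathrm{Id}=\mathrm{Id}-2\mathcal{R}$, and this reflected operator is Lipschitz with constant at most $1+2C$. The reflected resolvent $2\mathrm{prox}_{\alpha\mathfrak{F}_\mathbf{y}}-\mathrm{Id}$ of a $\mu$-strongly convex function is Lipschitz with constant $\frac{1}{1+\alpha\mu}$.

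PnP-ADMM is a Douglas--Rachford iteration in the variable $\boldsymbol{\xi}-\boldsymbol{\upsilon}$, with map $\tfrac12\mathrm{Id}+\tfrac12(2\mathrm{prox}-\mathrm{Id})(2\mathcal{D}-\mathrm{Id})$. Ryu et al.\ use a sharper bound, which exploits that $\mathrm{Id}-2\mathcal{R}$ is a perturbation of the identity, to show this map is a contraction precisely when $\frac{C}{\mu(1+C-2C^2)}<\alpha$. Banach's fixed-point theorem then gives a unique fixed point and convergence of $\boldsymbol{\xi}^{[k]}-\boldsymbol{\upsilon}^{[k]}$. Since $\hat{\mathbf{x}}^{[k+1]}=\mathcal{D}(\boldsymbol{\xi}^{[k]}-\boldsymbol{\upsilon}^{[k]})$ and $\mathcal{D}$ is continuous, the sequence $\hat{\mathbf{x}}^{[k]}$ also converges.

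The main obstacle is not the Lipschitz estimate itself but stating Ryu et al.'s theorem precisely enough to apply it here. Their constant matches the theorem's condition with $\varepsilon=C$, so I would cite it directly rather than re-derive the contraction factor.
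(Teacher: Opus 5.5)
Your proposal is essentially the paper's proof: use $\mathbf{G}^\mathsf{H}\mathbf{G}=\mathbf{I}$ to write the time-domain denoiser as $\mathrm{Id}-\mathcal{R}$ with $\mathcal{R}=(C/B)\,\mathbf{G}^\mathsf{H}\circ\mathcal{D}_{\!\mathcal{A}}^{\mathrm{(Lips)}}\circ\mathbf{G}$, obtain $\operatorname{Lip}(\mathcal{R})\le C<1$, and invoke Corollary~3 of \cite{ryuPlugandPlayMethodsProvably2019}; your Step~3 on real-valuedness is extra care the paper skips. The recap in Step~4 is not load-bearing but has two slips: $2\mathrm{prox}_{\alpha\mathfrak{F}_\mathbf{y}}-\mathrm{Id}$ is in general only nonexpansive (it is $\mathrm{prox}_{\alpha\mathfrak{F}_\mathbf{y}}$ itself that is $\frac{1}{1+\alpha\mu}$-Lipschitz), and PnP-ADMM is Douglas--Rachford in the variable $\hat{\mathbf{x}}^{[k+1]}+\boldsymbol{\upsilon}^{[k]}$ with map $\tfrac12\mathrm{Id}+\tfrac12(2\mathcal{D}-\mathrm{Id})(2\mathrm{prox}-\mathrm{Id})$ rather than in $\boldsymbol{\xi}-\boldsymbol{\upsilon}$ with the order swapped --- so citing the corollary directly, as you propose, is the right move.
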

\begin{proof}
Using the Parseval-tightness $\mathbf{G}^\mathsf{H}\mathbf{G}=\mathbf{I}$, we have
\begin{equation*}
\begin{aligned}\mathbf{G}^\mathsf{H}\circ\mathcal{D}_{(\mathcal{A},C)}^\mathrm{(CoReM\text{-}Lips)} \circ \mathbf{G}&=\mathrm{Id}-\underset{\mathcal{R}:\mathbb{R}^T\to\mathbb{R}^T}{\underbrace{C\mathbf{G}^\mathsf{H} \circ(\mathcal{D}_{\!\mathcal{A}}^\mathrm{(Lips)}/B)\circ\mathbf{G}}}.
    \end{aligned}
\end{equation*}
Then $\operatorname{Lip}(\mathcal{R})\leq C<1$.
Therefore, the operator $\mathcal{D}=\mathbf{G}^\mathsf{H}\circ\mathcal{D}_{(\mathcal{A},C)}^\mathrm{(CoReM\text{-}Lips)} \circ \mathbf{G}$ satisfies 
Eq.~(A) in \cite{ryuPlugandPlayMethodsProvably2019}.
Together with the other assumptions, \cite[Corollary~3]{ryuPlugandPlayMethodsProvably2019} guarantees the convergence to a unique fixed point.
\end{proof}

\subsection{Estimator- and Masking-based CoReM-LipsAMs}

We next propose CoReM-LipsAM architectures based on LipsAM-E and LipsAM-M introduced in Sect.~\ref{sec:LipsAM}.
We first propose \textbf{CoReM-LipsAM-E} using LipsAM-E in Eq.~\eqref{eq:signallim}.
As shown in Proposition~\ref{thm:lips_bound_se_re}, the Lipschitz constant of LipsAM-E is bounded by $B_\mathcal{E}(L)=\sqrt{L^2+1}>0$, where $L\geq\operatorname{Lip}(\mathcal{E})$.
Substituting them into Eq.~\eqref{eq:CRLipsAM} yields
\begin{align}
    &\text{CoReM-LipsAM-E}:\quad\mathcal{D}_{(\mathcal{E},C)}^{\text{(CoReM-Lips)}}(\mathbf{z}) = \nonumber\\
    &\qquad\mathbf{z}-(C/B_\mathcal{E}(L))\cdot\bigl(\min(\mathcal{E}(|\mathbf{z}|),|\mathbf{z}|)\bigr)_+\odot\operatorname{sign}(\mathbf{z}).
    \label{eq:CoReM-LipsAM-E}
\end{align}
Similarly, we propose \textbf{CoReM-LipsAM-M}, which is defined with a $L$-Lipschitz-continuous mapping $\mathcal{M}$ as%
\footnote{
The reason for the sign inversion before $g$ is the same as in Sect.~\ref{sec:residualmap}.
}
\begin{align}
&\text{CoReM-LipsAM-M}:
\mathcal{D}_{(\mathcal{M},g,a,b,C)}^{\mathrm{(\text{CoReM-Lips})}}(\mathbf{z})
= \\
&\qquad \mathbf{z}-\frac{C}{B_\mathcal{M}(L,a,b)}\cdot\underset{{}=\widetilde{\mathcal{D}}^\mathrm{(Lips)}_{(\mathcal{M},g,a,b)}(\mathbf{z})}{\underbrace{ G\Bigl(-\max(\mathcal{M}(|\mathbf{z}|),\, a|\mathbf{z}| + b \mathbf{1})\Bigr)\odot \mathbf{z}}},  \nonumber
\label{eq:CoReM-LipsAM-M}
\end{align}
where $B_\mathcal{M}(L,a,b)>0$ denotes an upper bound on
$\operatorname{Lip}(\widetilde{\mathcal{D}}^\mathrm{(Lips)}_{(\mathcal{M},g,a,b)})$
and can be numerically estimated by setting $\psi(u,v) = g(-\max(u,a v + b))\cdot v$ and computing Eq.~\eqref{eq:worst_case_lipschitz_bound_2d}.

Note that both CoReM-LipsAM-E and CoReM-LipsAM-M can represent the soft-thresholding operator, which is the proximity operator of the $\ell_1$-norm, given by  
\begin{equation}
    \text{\rm{Soft-Thresholding}:}\quad
    \mathcal{S}_\tau(\mathbf{z})
    =
    (|\mathbf{z}|-\tau)_+\odot\operatorname{sign}(\mathbf{z}),
    \label{eq:soft_thresholding}
\end{equation}
where $\tau>0$.
By setting $\mathcal{E}(\mathbf{x})=\tau\mathbf{1}$ (whose Lipschitz constant is $0$) and $C=1$, CoReM-LipsAM-E reduces to the soft-thresholding operator.
CoReM-LipsAM-M also reduces to the soft-thresholding operator with $g(t) = 1/\max\left(-t,1\right)$, $\mathcal{M}(\mathbf{x})=\mathbf{0}$, $a>0$, $b=0$ and $C=1$.
In this case, $a>0$ determines the threshold as $\tau = 1/a$.

\subsection{Practical Considerations for Training CoReM-LipsAMs}
\label{sec:training_corem_lipsams}

The Lipschitz bound $L\geq\operatorname{Lip}(\mathcal{L})$ should be explicitly handled during training, where $\mathcal{L}\in\{\mathcal{E},\mathcal{M}\}$ is the learnable mapping.
To this end, we construct the learnable mapping $\mathcal{L}$ using a learnable $1$-Lipschitz-continuous mapping
$\widetilde{\mathcal{L}}:\mathbb{R}_+^N\to\mathbb{R}^N$ and pre-scaling by $L=\operatorname{sigmoid}(s_L)\in(0,1)$ as%
\footnote{
The scaling factor $L$ is applied before $\widetilde{\mathcal{L}}$ rather than after it because, under the post-scaling form, $\mathcal{L}(\mathbf{x})=L\widetilde{\mathcal{L}}(\mathbf{x})$,
$L=0$ would necessarily yield $\mathcal{L}(\mathbf{x})=\mathbf{0}$ for all $\mathbf{x}\in\mathbb{R}_+^N$, thereby excluding nonzero constant mappings whose Lipschitz constants are zero.
The learned values of $L$ for the best-performing models in our experiment (in Sect.~\ref{sec:expt}, marked in Fig.~\ref{fig:sisnr_1}) were $0.48$ for CoReM-LipsAM-E and $0.14$ for CoReM-LipsAM-M.
}
\begin{equation}
\mathcal{L}(\mathbf{x})
=
\widetilde{\mathcal{L}}(L\,\mathbf{x}),
\end{equation}
where $s_L\in\mathbb{R}$ is a learnable parameter.
The mapping $\widetilde{\mathcal{L}}$ can be implemented using existing methods for constructing $1$-Lipschitz-continuous DNNs, such as structurally $1$-Lipschitz-continuous layers \cite{prach1LipschitzLayersCompared2024,meunierDynamicalSystemPerspective2022a,prachAlmostOrthogonalLayersEfficient2022,liPreventingGradientAttenuation2019}.
This construction guarantees $\operatorname{Lip}(\mathcal{L})\leq L=\operatorname{sigmoid}(s_L)$ and allows $L$ to be jointly optimized with the parameters of $\widetilde{\mathcal{L}}$.

For CoReM-LipsAM-M, the bound $B_{\mathcal{M}}(L,a,b)$ must be numerically estimated for each $(L,a,b)$.
However, this procedure is incompatible with training using automatic differentiation.
Therefore, we propose to use a differentiable approximation for this bound during training. 
Let $\mathcal{G}=\mathcal{S}_L\times\mathcal{S}_a\times\mathcal{S}_b$ be a grid of parameters, where $\mathcal{S}_L\subset[0, 1]$, $\mathcal{S}_a\subset\mathbb{R}_{++}$, and $\mathcal{S}_b\subset\mathbb{R}$.
We precompute estimates of $B_{\mathcal{M}}(L,a,b)$ at every point in $\mathcal{G}$ using the proposed framework in Sect.~\ref{sec:bounds} and construct a differentiable approximation $\widetilde{B}(L,a,b)\approx B_{\mathcal{M}}(L,a,b)$ by using, e.g., cubic splines.
Moreover, to constrain the parameters $(a,b)$ to be in the ranges covered by $\mathcal{G}$, we reparameterize them as 
\begin{align}
a &= \min\mathcal{S}_a
+\left(\max\mathcal{S}_a-\min\mathcal{S}_a\right)
\mathrm{sigmoid}(s_a),\label{eq:reparam_a}\\
b &= \min\mathcal{S}_b
+\left(\max\mathcal{S}_b-\min\mathcal{S}_b\right)
\mathrm{sigmoid}(s_b),\label{eq:reparam_b}
\end{align}
where $s_a,s_b\in\mathbb{R}$ are learnable parameters.
After training, the approximated bound is replaced by the numerical estimate obtained by the framework in Sect.~\ref{sec:bounds} for the learned $(L,a,b)$.

\section{Experiment on Speech Dereverberation}
\label{sec:expt}

\subsection{Problem Setting and Algorithm}
This experiment considers the recovery of a speech signal $\mathbf{x}\in\mathbb{R}^T$ from its reverberant and noisy observation $\mathbf{y}=\mathbf{H}\mathbf{x} + \mathbf{n}\in\mathbb{R}^T$, where $T$ is the signal length in the time domain,
$\mathbf{H}\in\mathbb{R}^{T \times T}$ is a convolution matrix for a room impulse response $\mathbf{h}\in\mathbb{R}^T$, 
and $\mathbf{n}\in\mathbb{R}^T$ is additive Gaussian noise. 
The level of $\mathbf{n}$ was set to $-40$\,dB with respect to the reverberant speech. 
For the STFT operator $\mathbf{G}\in\mathbb{C}^{N\times T}$, a Parseval-tight window computed from a Hann window \cite{janssenCharacterizationComputationCanonical2002} was used with a window length of 256 and a hop size of 128. 
Ten pairs of a source signal and an impulse response $(\mathbf{x},\mathbf{h})$ were randomly chosen from the \texttt{test-clean} in LibriTTS-R \cite{koizumiLibriTTSRRestoredMultiSpeaker2023} and the BUT reverb database \cite{szokeBuildingEvaluationReal2019}, respectively.
The signals were resampled to 8\,kHz and then trimmed to 64 frames in the STFT domain.

The algorithm in Eq.~\eqref{eq:plug-and-play-LipsAM} is applicable for this task by setting the data fidelity function as follows:
\begin{equation}
    \mathfrak{F}_\mathbf{y}(\hat{\mathbf{x}}) = \frac{1}{2}\left\|\mathbf{H}\hat{\mathbf{x}} - \mathbf{y}\right\|_2^2 + \frac{\gamma}{2}\|\hat{\mathbf{x}}\|_2^2,
\end{equation}
where the ridge term $(\gamma/2)\|\hat{\mathbf{x}}\|_2^2$ is added to make the data-fidelity function at least $\gamma$-strongly convex (the strong convexity is required in Theorem~\ref{thm:contractionTF}).
The proximity operator with parameter $\alpha\geq0$ for this term is given by
$\mathrm{prox}_{\alpha \mathfrak{F}_\mathbf{y}}(\mathbf{v})=((1+\alpha\gamma)\mathbf{I}+\alpha\mathbf{H}^\mathsf{T}\mathbf{H})^{-1}(\mathbf{v}+\alpha\mathbf{H}^\mathsf{T}\mathbf{y})$.
The parameters of the algorithm were set to $\gamma=0.005$, $C=0.99$, and $\alpha=1.01 C/(\gamma(1+C-2C^2))$, so that Theorem~\ref{thm:contractionTF} applies.
The performance was evaluated by the scale-invariant SNR (SI-SNR) \cite{rouxSDRHalfbakedWell2019}.
We also computed $\Delta\mathbf{x}^{[k]}=\|\hat{\mathbf{x}}^{[k+1]}-\hat{\mathbf{x}}^{[k]}\|_2$, which converges to 0 when the algorithm converges to a point.

\subsection{Denoisers}

We trained the proposed CoReM-LipsAM-E and CoReM-LipsAM-M on speech denoising tasks using additive Gaussian noise (which is a standard procedure for PnP methods).

For comparison, we also trained two conventional residual-map-based AMs (ReM-AMs), defined as
\begin{equation}
    \text{ReM-AM:}\quad
    \mathcal{D}_{\!\mathcal{A}}^{(\mathrm{ReM})}
    =
    \mathrm{Id}-\mathcal{D}_{\!\mathcal{A}},
    \label{eq:ReM-AM}
\end{equation}
where AM-E in Eq.~\eqref{eq:signal} and AM-M in Eq.~\eqref{eq:masking} were used for $\mathcal{D}_{\!\mathcal{A}}$.
Since ReM-AMs are generally not Lipschitz continuous, convergence of the PnP algorithm with them is not guaranteed.

\begin{figure}
    \centering
    \includegraphics[scale=0.5]{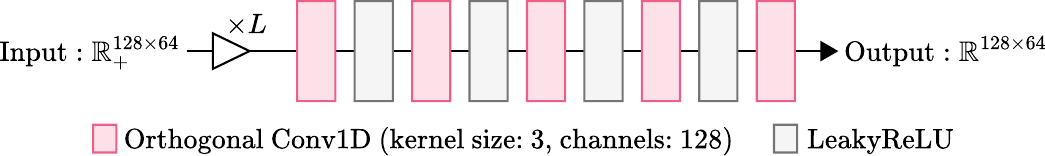}
    \caption{%
    The architecture of learnable mapping $\mathcal{E}$ and $\mathcal{M}$ for CoReM-LipsAMs. 
    The value $L\in(0,1)$ is parametrized as $L=\operatorname{sigmoid}(s_L)$, where $s_L\in\mathbb{R}$ is a learnable parameter. 
    This structure ensures $\operatorname{Lip}(\mathcal{E}), \operatorname{Lip}(\mathcal{M})\leq L$.
    }
    \label{fig:network}
\end{figure}

The architecture of the learnable mapping $\mathcal{L}\in\{\mathcal{E},\mathcal{M}\}$ for CoReM-LipsAM is illustrated in Fig.~\ref{fig:network}.
Following the construction in Sect.~\ref{sec:training_corem_lipsams}, it first applies pre-scaling by $L=\operatorname{sigmoid}(s_L)$ ($s_L\in\mathbb{R}$), followed by five 1D orthogonal convolutional layers \cite{prachAlmostOrthogonalLayersEfficient2022} combined with Leaky ReLU (slope of $0.1$).
For ReM-AMs, we used standard convolutional layers and omitted pre-scaling.
The kernel size and the number of channels of the convolutional layers were set to $3$ and $128$, respectively.
For masking-based architectures, we set $g=\operatorname{sigmoid}$.
For training CoReM-LipsAM-M, a differentiable surrogate for $B_{\mathcal{M}}$ was constructed as in Sect.~\ref{sec:training_corem_lipsams}, where the parameter sets were chosen as
$\mathcal{S}_L=\{0,0.1,\ldots,1\}$,  
$\mathcal{S}_a=\{0.1,0.2,\ldots,1\}$,
and 
$\mathcal{S}_b=\{-5,-4,\ldots,5\}$.

We used \texttt{train-clean-100} in LibriTTS-R\cite{koizumiLibriTTSRRestoredMultiSpeaker2023} as the dataset for clean speech signals.
The SNR during training was set from $10$ to $30$\,dB in $2$\,dB steps.
Time-domain mean squared error (MSE) loss was minimized using the Adam optimizer with a learning rate of $0.001$.
The batch size was set to 32, and the models were trained for 10 epochs.
As a baseline, the soft-thresholding operator in Eq.~\eqref{eq:soft_thresholding} was also trained for each SNR setting in the same way as the DNNs, where the threshold $\tau\in\mathbb{R}_+$ was the only learnable parameter.

\subsection{Results}

The experimental results are shown in Fig.~\ref{fig:sisnr_1}.
The left panel shows the SI-SNR (higher is better).
Here, the horizontal axis corresponds to the SNR during training, where a lower SNR means a higher noise level during training.
The conventional ReM-AMs (dotted lines) achieved good performance in this experiment.
In particular, ReM-AM-M, which is based on time-frequency masking, yielded the highest SI-SNR.
The proposed CoReM-LipsAMs (solid lines) showed degraded performance compared to them.
This degradation may result from the architectural constraints imposed to guarantee Lipschitz continuity or contractivity of residual mappings.
Among the proposed architectures, CoReM-LipsAM-E achieved higher SI-SNR than CoReM-LipsAM-M.
Both CoReM-LipsAMs achieved higher SI-SNR than soft-thresholding (gray line).

For the best cases for each architecture (markers in the left panel), the transition of $\Delta\mathbf{x}^{[k]}$ is shown in the right panel in Fig.~\ref{fig:sisnr_1}.
The value $\Delta\mathbf{x}^{[k]}$ for ReM-AMs (dotted lines) did not decrease, suggesting that the algorithm did not converge.
In contrast, the proposed CoReM-LipsAMs (solid lines) reduced $\Delta\mathbf{x}^{[k]}$ to around $10^{-12}$, which is comparable to that obtained by soft-thresholding (gray line).
This supports the convergence result in Theorem~\ref{thm:contractionTF}.
Moreover, the CoReM-LipsAMs converged in fewer iterations than soft-thresholding, indicating their fast empirical convergence in this experiment.

\begin{figure}
    \includegraphics[scale=0.5]{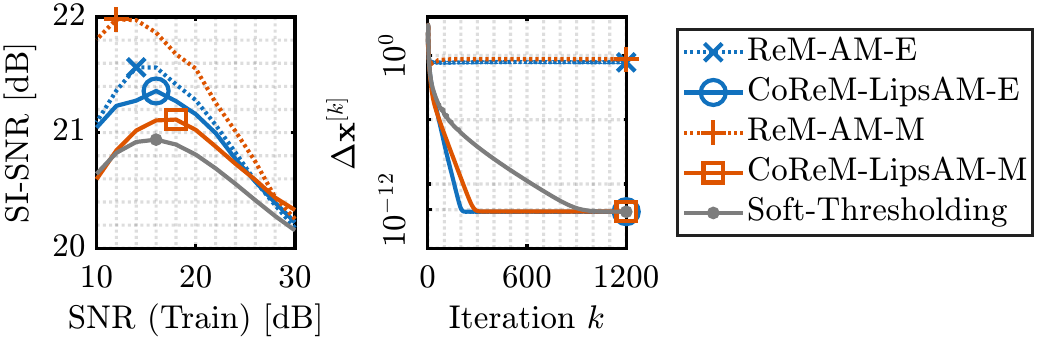}
    \caption{Performance of the PnP algorithm on the speech dereverberation task.
    Dotted lines indicate conventional DNN architectures that have no theoretical guarantee.
    Solid colored lines indicate the proposed CoReM-LipsAMs.
    The gray line is the result of soft-thresholding as a baseline.
    (Left) SI-SNR after 1200 iterations with respect to the SNR during training with denoising tasks.
    The best results for each architecture are marked.
    (Right) Transition of $\Delta\mathbf{x}^{[k]}$ for the marked results in the left panel.
    }
    \label{fig:sisnr_1}
\end{figure}

\section{Conclusion}
\label{sec:conc}

In this paper, we investigated LipsAMs, i.e., Lipschitz-continuous amplitude modifiers.
We established necessary and sufficient conditions for the Lipschitz continuity of AMs and developed DNN architectures of LipsAM.
We also developed an efficient framework for evaluating Lipschitz constants by reducing the input dimensionality to two.
These results provide a theoretical foundation for constructing and analyzing DNN-based audio signal processing methods with a theoretical guarantee.
Building on these results, we proposed CoReM-LipsAMs and a convergent PnP algorithm for audio signal recovery.
An experiment on speech dereverberation confirmed the convergence behavior of the proposed methods.

Our experiment showed that the proposed methods underperformed their conventional counterparts in the dereverberation task and provided only limited improvements over soft-thresholding.
This result indicates a trade-off between theoretical guarantees and processing performance, as enforcing Lipschitz continuity and/or contractivity on the residual maps restricts the expressive power of DNNs.
While we adopted the contractivity of the residual map as a condition on DNNs, various other convergence conditions and algorithms have been developed for PnP methods. Exploring such alternatives may lead to improved signal recovery performance while guaranteeing convergence.
In addition, we evaluated the proposed methods only on speech dereverberation, although their applicability is not limited to this task. Evaluating them on a broader range of signal recovery tasks is therefore an important direction for future work.
More broadly, since Lipschitz continuity is useful beyond PnP algorithms, LipsAMs may also be applicable to other signal processing methods that require Lipschitz continuity or Lipschitz constants of DNNs.

\appendices

\section{Proof of Theorem \ref{thm:lipsam}}
\label{app:proof_thm_lipsam}

We show that an AM $\mathcal{D}_{\!\mathcal{A}}=\mathcal{A}(|\mathbf{z}|)\odot\operatorname{sign}(\mathbf{z})$ is Lipschitz continuous $\Leftrightarrow$ $\mathcal{D}_{\!\mathcal{A}}$ is a LipsAM defined in Definition~\ref{def:lipsam}.

\begin{proof}
($\Leftarrow$) 
Take arbitrary $\mathbf{z}$, $\mathbf{w} \in \mathbb{C}^N$ and represent them in polar form as $\mathbf{z} = \mathbf{x}\odot \exp(\rm{i}\boldsymbol{\varphi})$, $\mathbf{w} = \mathbf{y}\odot \exp(\rm{i}\boldsymbol{\phi})$, respectively,
where $\mathbf{x}$, $\mathbf{y} \in \mathbb{R}_+^N$, $\boldsymbol{\varphi}$, $\boldsymbol{\phi} \in [0,2\pi)^N$, and $\exp(\cdot)$ is the element-wise exponential function.
Then we have $|z_n - w_n|^2 = |x_n - y_n|^2 + 2 x_n y_n (1 - \cos (\varphi_n - \phi_n))$ for all $n\in[N]$.
This with Eqs.~\eqref{eq:cond1} and \eqref{eq:cond2} yields
$\|\mathcal{D}_{\!\mathcal{A}}(\mathbf{z})-\mathcal{D}_{\!\mathcal{A}}(\mathbf{w})\|_2^2
=\|\mathcal{A}(\mathbf{x})-\mathcal{A}(\mathbf{y})\|_2^2+2\sum_{n=1}^N(\mathcal{A}(\mathbf{x}))_n(\mathcal{A}(\mathbf{y}))_n(1-\cos(\varphi_n-\phi_n))
\leq L_1^2\|\mathbf{x}-\mathbf{y}\|_2^2+2L_2^2\sum_{n=1}^N x_ny_n(1-\cos(\varphi_n-\phi_n))
\leq\max(L_1^2,L_2^2)(\|\mathbf{x}-\mathbf{y}\|_2^2+2\sum_{n=1}^N x_ny_n(1-\cos(\varphi_n-\phi_n)))
\leq(\max(L_1,L_2)\|\mathbf{z}-\mathbf{w}\|_2)^2
$.
Therefore, $\mathcal{D}_{\!\mathcal{A}}$ is $\max(L_1,L_2)$-Lipschitz continuous.

($\Rightarrow$) 
The necessity of Lipschitz continuity of $\mathcal{A}$, i.e., Eq.~\eqref{eq:cond1}, is trivial.
The necessity of the condition in Eq.~\eqref{eq:cond2} can be proven by showing its contraposition.
Without loss of generality, we divide the input and output of $\mathcal{A}$ into their first components and 
the remaining $N-1$ components as 
$\mathcal{A}(\mathbf{x})=(\mathcal{A}_1(x_1,\mathbf{x}_{2:}),\mathcal{A}_2(x_1,\mathbf{x}_{2:}))$, where $\mathcal{A}_1:\mathbb{R}_+\times\mathbb{R}_+^{N-1}\to\mathbb{R}_+$ and $\mathcal{A}_2:\mathbb{R}_+\times\mathbb{R}_+^{N-1}\to\mathbb{R}_+^{N-1}$.
Let $\mathcal{D}_{\!\mathcal{A}}:\mathbb{C}^N\to\mathbb{C}^N$ be an AM given by $\mathcal{D}_{\!\mathcal{A}}(\mathbf{z})=\bigl(\mathcal{A}_1(|z_1|,|\mathbf{z}_{2:}|),\mathcal{A}_2(|z_1|,|\mathbf{z}_{2:}|)\bigr)\odot\operatorname{sign}(\mathbf{z})$.
Assume that $\mathcal{D}_{\!\mathcal{A}}$ is Lipschitz continuous,
and, for any $L_2\geq 0$, there exists $\mathbf{x}=(x_1,\mathbf{x}_{2:})\in\mathbb{R}_+\times \mathbb{R}_+^{N-1}$ such that $(\mathcal{A}_1(x_1,\mathbf{x}_{2:})) > L_2 x_1$.
Taking $\mathbf{z}= (x_1,\mathbf{x}_{2:})$ and $\widetilde{\mathbf{z}} = (-x_1,\mathbf{x}_{2:})$ in $\mathbb{R}\times\mathbb{R}_+^{N-1}\subset\mathbb{C}\times\mathbb{C}^{N-1}$, we have
\begin{equation*}
\begin{aligned}
&\|\mathcal{D}_{\!\mathcal{A}}(\mathbf{z})-
\mathcal{D}_{\!\mathcal{A}}(\widetilde{\mathbf{z}})\|_2^2= 
|\mathcal{A}_1(x_1,\mathbf{x}_{2:}) - (-\mathcal{A}_1(x_1,\mathbf{x}_{2:}))|^2
\\
{}&\qquad\qquad\qquad\qquad\qquad+\|\mathcal{A}_2(x_1,\mathbf{x}_{2:}) - \mathcal{A}_2(x_1,\mathbf{x}_{2:})\|_2^2
\\
&{}= (2\mathcal{A}_1(x_1,\mathbf{x}_{2:}))^2> L_2^2(2x_1)^2
= L_2^2\|{\mathbf{z}-\widetilde{\mathbf{z}}\|_2^2}.
\end{aligned}
\end{equation*}
This contradicts Lipschitz continuity of $\mathcal{D}_{\!\mathcal{A}}$, indicating that if $\mathcal{D}_{\!\mathcal{A}}$ is Lipschitz continuous, then $(\mathcal{A}_1(x_1,\mathbf{x}_{2:}))\leq L_2\,x_1$ must hold for all $\mathbf{x}\in\mathbb{R}_+^N$ for some constant $L_2\geq 0$, which is exactly the statements in Eq.~\eqref{eq:cond2} in Definition~\ref{def:lipsam}.
\end{proof}

\section{Proof of Theorem \ref{prop:zero_preserving}}
\label{app:proof_prop_zero_preserving}

\begin{proof}
It suffices to show the equivalence between Eq.~\eqref{eq:cond2} and Eq.~\eqref{eq:zero_preserving} under the assumption that $\mathcal{A}$ is Lipschitz continuous.
Assume that for some $L_2\ge 0$, Eq.~\eqref{eq:cond2} holds for any $\mathbf{x}\in\mathbb{R}_+^N$ and $n\in[N]$.
Then, if $x_n=0$, we have $0\le (\mathcal{A}(\mathbf{x}))_n \le L_2 x_n = 0$, which implies Eq.~\eqref{eq:zero_preserving}.

Conversely, assume that Eq.~\eqref{eq:zero_preserving} holds and $\mathcal{A}$ is $L_1$-Lipschitz continuous for some $L_1\ge 0$.
Fix arbitrary $\mathbf{x}\in\mathbb{R}_+^N$ and $n\in[N]$.
Let $\tilde{\mathbf{x}}\in\mathbb{R}_+^N$ be defined by
$\tilde{x}_n=0$ and
$\tilde{x}_k=x_k$ for all $k\neq n$.
Using $(\mathcal{A}(\tilde{\mathbf{x}}))_n=0$, we have
$(\mathcal{A}(\mathbf{x}))_n= (\mathcal{A}(\mathbf{x}))_n-(\mathcal{A}(\tilde{\mathbf{x}}))_n\le \|\mathcal{A}(\mathbf{x})-\mathcal{A}(\tilde{\mathbf{x}})\|_2\le L_1\|\mathbf{x}-\tilde{\mathbf{x}}\|_2
= L_1 x_n$.
Hence Eq.~\eqref{eq:cond2} holds with $L_2=L_1$. 
\end{proof}

\section{Proof of Theorem \ref{thm:mask_sufficient}}
\label{sec:proof_thm:mask_sufficient}

\begin{proof}
Let $\mathcal{A}(\mathbf{x}) = G(\mathcal{M}(\mathbf{x})) \odot \mathbf{x}$.
The zero-preserving property of $\mathcal{A}$ is satisfied, i.e., for any $\mathbf{x}\in\mathbb{R}_+^N$ with $x_n=0$, 
it holds that $(\mathcal{A}(\mathbf{x}))_n=0$.
By Theorem~\ref{prop:zero_preserving}, it remains to prove the Lipschitz continuity of $\mathcal{A}$ to show that $\mathcal{D}_{\!\mathcal{A}}(\mathbf{z})=\mathcal{A}(|\mathbf{z}|)\odot\operatorname{sign}(\mathbf{z})$ is a LipsAM.
Take arbitrary $\mathbf{x}\in\mathbb{R}_+^N$ and let
$\mathbf{\Phi}(\mathbf{x})= 
\mathrm{diag}(x_1\cdot(\mathrm{d}g/\mathrm{d}t)((\mathcal{M}(\mathbf{x}))_1), \ldots, x_N\cdot(\mathrm{d}g/\mathrm{d}t)((\mathcal{M}(\mathbf{x}))_N))\in\mathbb{R}^{N\times N}$. 
Then we have
$\mathbf{J}_{\mathcal{A}}(\mathbf{x})=\mathrm{diag}\bigl(G(\mathcal{M}(\mathbf{x}))\bigr)+\mathbf{\Phi}(\mathbf{x})\,\mathbf{J}_{\mathcal{M}}(\mathbf{x})$. We now derive an upper bound on the operator norm of this Jacobian.
Fix arbitrary $\epsilon>0$.
Then Condition~(i) implies that there exist constants $C_\mathcal{M}\geq0$ (independent of $n$ and $\epsilon$) and $R_\epsilon\geq0$ (independent of $n$) such that
$|x_n|\ge R_\epsilon
\Rightarrow
\left|x_n/(\mathcal{M}(\mathbf{x}))_n\right|
\le C_{\mathcal{M}}+\epsilon$ for all $n\in[N]$.
By Condition~(ii), there exists a constant $C_g\geq0$ such that 
$|t\cdot (\mathrm{d}g/\mathrm{d}t)(t)|\leq C_g$ for every $t\in\mathbb{R}$.
Hence, if $|x_n|\ge R_\epsilon$, then 
$|x_n\cdot(\mathrm{d}g/\mathrm{d}t) ((\mathcal{M}(\mathbf{x}))_n)|=|x_n/(\mathcal{M}(\mathbf{x}))_n|\cdot|(\mathcal{M}(\mathbf{x}))_n\cdot (\mathrm{d}g/\mathrm{d}t) ((\mathcal{M}(\mathbf{x}))_n)|
\le (C_{\mathcal{M}}+\epsilon)C_g$ holds.
On the other hand, if $|x_n|<R_\epsilon$, then we have $|x_n\cdot (\mathrm{d}g/\mathrm{d}t)((\mathcal{M}(\mathbf{x}))_n)|
\le R_\epsilon\,\operatorname{Lip}(g)$.
Combining the above inequalities, 
we have $\|\mathbf{\Phi}(\mathbf{x})\|_{\mathrm{op}}\leq \max\left\{(C_{\mathcal{M}}+\epsilon)C_g, R_\epsilon\,\operatorname{Lip}(g)\right\}$ for any $\mathbf{x}\in\mathbb{R}_+^N$.
Therefore we obtain
\begin{align*}
\|\mathbf{J}_{\mathcal{A}}\|_{\mathrm{op}}
&\le \|\mathrm{diag}(G(\mathcal{M}(\mathbf{x})))\|_{\mathrm{op}}
   + \|\mathbf{\Phi}(\mathbf{x})\|_{\mathrm{op}}\,\|\mathbf{J}_{\mathcal{M}}(\mathbf{x})\|_{\mathrm{op}} \\
&\le B_g
 + \max\left\{(C_{\mathcal{M}}+\epsilon)C_g, R_\epsilon\,\operatorname{Lip}(g)\right\}\cdot \operatorname{Lip}(\mathcal{M}).\nonumber
\end{align*}
Hence the operator norm of $\mathbf{J}_{\mathcal{A}}$ is bounded for any $\mathbf{x}\in\mathbb{R}_+^N$,
and therefore $\mathcal{A}$ is Lipschitz continuous. 
\end{proof}

\section{Proof of Theorem \ref{thm:lipsam_sm}}
\label{app:proof_thm_lipsam_sm}

\begin{proof}
Let $\widetilde{\mathcal{M}}:\mathbb{R}_+^N\to\mathbb{R}^N:\mathbf{x}\mapsto\max(\mathcal{M}(\mathbf{x}),\, a \mathbf{x} + b \mathbf{1})$,
 and rewrite $\mathcal{D}_{(\mathcal{M},g,a,b)}^\mathrm{(Lips)}(\mathbf{z})
= G(\widetilde{\mathcal{M}}(|\mathbf{z}|)) \odot \mathbf{z}$.
The mapping $\widetilde{\mathcal{M}}$ is Lipschitz continuous since the element-wise maximum of two Lipschitz-continuous functions is also Lipschitz continuous.
Take arbitrary $n\in[N]$ and arbitrary sequence $(\mathbf{x}^{[k]})_{k\in\mathbb{N}}$ in $\mathbb{R}_+^N$ such that $x_n^{[k]}\to\infty$.
Then we have
$(\widetilde{\mathcal{M}}(\mathbf{x}^{[k]}))_n \geq a x_n^{[k]}+b$ and, for sufficiently large $k$, we have $a x_n^{[k]}+b>0$ $(a>0$, $b\in\mathbb{R})$.
Hence, for any $n\in[N]$, we have
\begin{equation}
\limsup_{k\to\infty}
\left|
\frac{x_n^{[k]}}{(\widetilde{\mathcal{M}}(\mathbf{x}^{[k]}))_n}
\right|
\le
\lim_{k\to\infty}\frac{x_n^{[k]}}{a x_n^{[k]}+b}
=
\frac{1}{a}.
\end{equation}
Therefore $\widetilde{\mathcal{M}}$ satisfies Condition~(i) with $C_\mathcal{M}=1/a$.
Since $g$ is Lipschitz continuous and satisfies Condition~(ii) by assumption,
Theorem~\ref{thm:mask_sufficient} guarantees that $\mathcal{D}_{(\mathcal{M},g,a,b)}^\mathrm{(Lips)}$ is a LipsAM.
\end{proof}

\section{Looseness of the Trivial Bound}
\label{sec:loose}
For a LipsAM $\mathcal{D}_{(\mathcal{L},\psi)}$ represented in the general form in Eq.~\eqref{eq:generalLipsAM}, we have an upper bound on its Lipschitz constants.

\begin{prop}
\label{prop:bound}
Let $\mathcal{D}_{(\mathcal{L},\psi)}$ be a LipsAM given in Eq.~\eqref{eq:generalLipsAM}.
Then $\operatorname{Lip}(\mathcal{D}_{(\mathcal{L},\psi)})\leq \operatorname{Lip}(\psi)\cdot\sqrt{(\operatorname{Lip}(\mathcal{L}))^2+1}$.
\end{prop}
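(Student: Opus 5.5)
By Proposition~\ref{prop:lip_equal}, $\operatorname{Lip}(\mathcal{D}_{(\mathcal{L},\psi)})=\operatorname{Lip}(\mathcal{A}_{(\mathcal{L},\psi)})$, where $\mathcal{A}_{(\mathcal{L},\psi)}(\mathbf{x})=\Psi(\mathcal{L}(\mathbf{x}),\mathbf{x})$. It therefore suffices to bound the Lipschitz constant of this real map on $\mathbb{R}_+^N$. We avoid the Jacobian representation and work with finite differences, so no differentiability issues arise.

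Write $\Lambda=\operatorname{Lip}(\psi)$, taken with respect to the Euclidean norm on $\mathbb{R}\times\mathbb{R}_+$, and $L=\operatorname{Lip}(\mathcal{L})$. Fix $\mathbf{x},\mathbf{y}\in\mathbb{R}_+^N$. For each $n$, the Lipschitz continuity of $\psi$ gives
\begin{equation*}
|\psi((\mathcal{L}(\mathbf{x}))_n,x_n)-\psi((\mathcal{L}(\mathbf{y}))_n,y_n)|^2\le \Lambda^2\bigl(|(\mathcal{L}(\mathbf{x}))_n-(\mathcal{L}(\mathbf{y}))_n|^2+|x_n-y_n|^2\bigr).
\end{equation*}
Summing over $n\in[N]$ yields
\begin{equation*}
\|\mathcal{A}_{(\mathcal{L},\psi)}(\mathbf{x})-\mathcal{A}_{(\mathcal{L},\psi)}(\mathbf{y})\|_2^2\le \Lambda^2\bigl(\|\mathcal{L}(\mathbf{x})-\mathcal{L}(\mathbf{y})\|_2^2+\|\mathbf{x}-\mathbf{y}\|_2^2\bigr).
\end{equation*}
Since $\|\mathcal{L}(\mathbf{x})-\mathcal{L}(\mathbf{y})\|_2^2\le L^2\|\mathbf{x}-\mathbf{y}\|_2^2$, the right-hand side is at most $\Lambda^2(L^2+1)\|\mathbf{x}-\mathbf{y}\|_2^2$. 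Taking square roots gives the claimed bound.

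Equivalently, one can view $\mathcal{A}_{(\mathcal{L},\psi)}=\Psi\circ(\mathcal{L},\mathrm{Id})$. The map $\mathbf{x}\mapsto(\mathcal{L}(\mathbf{x}),\mathbf{x})$ is $\sqrt{L^2+1}$-Lipschitz, and $\Psi$ is $\Lambda$-Lipschitz on $\mathbb{R}^N\times\mathbb{R}_+^N$ because it acts on disjoint coordinate pairs. The composition rule from Sect.~\ref{sec:preliminary} then gives the same bound.

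\textbf{Main obstacle.} The only delicate point is convention: $\operatorname{Lip}(\psi)$ must be measured with the Euclidean norm on $\mathbb{R}^2$. If another norm were intended, the argument would need an extra norm-equivalence constant.

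The problem statement also mentions looseness of the bound. If this is to be included, we would exhibit an example. Take ReM-LipsAM-E, with $\psi(u,v)=(v-(u)_+)_+$, so $\operatorname{Lip}(\psi)=\sqrt2$. The general bound then gives $\sqrt2\sqrt{L^2+1}$, which is strictly larger than the exact value $L+1$ from Proposition~\ref{thm:lips_bound_se_re} whenever $L\neq1$.
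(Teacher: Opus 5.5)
Your proof is correct and is essentially the paper's argument. The paper writes $\mathcal{A}_{(\mathcal{L},\psi)}=\widetilde{\Psi}\circ\widetilde{\mathcal{L}}$ with $\widetilde{\mathcal{L}}(\mathbf{x})=(\mathcal{L}(\mathbf{x}),\mathbf{x})$, and combines $\operatorname{Lip}(\widetilde{\Psi})=\operatorname{Lip}(\psi)$ with $\operatorname{Lip}(\widetilde{\mathcal{L}})\le\sqrt{(\operatorname{Lip}(\mathcal{L}))^2+1}$; this is exactly your second paragraph, and your coordinate-wise finite-difference computation just unpacks those two estimates. Your explicit use of Proposition~\ref{prop:lip_equal} to pass from $\mathcal{A}_{(\mathcal{L},\psi)}$ to $\mathcal{D}_{(\mathcal{L},\psi)}$ is a step the paper leaves implicit, so it clarifies the argument rather than changing it.
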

\begin{proof}
Let $\widetilde{\mathcal{L}}:\mathbb{R}_+^N\to\mathbb{R}^{N}\times\mathbb{R}^N_+:\mathbf{x}\mapsto(\mathcal{L}(\mathbf{x}),\mathbf{x})$ and $\widetilde{\Psi}:\mathbb{R}^{N}\times\mathbb{R}^N_+\to\mathbb{R}_+^N$ apply $\psi:\mathbb{R}\times\mathbb{R}_+\to\mathbb{R}_+$ element-wise to the $N$ pairs of the input.
Then, $\mathcal{A}_{(\mathcal{L},\psi)}$ in Eq.~\eqref{eq:generalLipsAM} is $\widetilde{\Psi}\circ\widetilde{\mathcal{L}}$. 
Therefore, $\operatorname{Lip}(\widetilde{\Psi})=\operatorname{Lip}(\psi)$ and $\operatorname{Lip}(\mathrm{\widetilde{\mathcal{L}}})\leq\sqrt{(\operatorname{Lip}(\mathcal{L}))^2+1}$ yield the bound.
\end{proof}

However, this bound is not tight; ReM-LipsAM-E in Eq.~\eqref{eq:ReMLipsAME} is an example.
The function $\psi(u,v)=(v-(u)_+)_+$ in Eq.~\eqref{eq:phiREME} gives $\operatorname{Lip}(\psi)=\sqrt{2}$, and hence $\operatorname{Lip}(\mathcal{D}_{\mathcal{E}}^{(\mathrm{ReM\text{-}Lips)}})\leq \sqrt{2}\sqrt{(\operatorname{Lip}(\mathcal{E}))^2+1}$ by Proposition~\ref{prop:bound}.
At the same time, triangle inequality and the nonexpansiveness of ReLU gives, for any $\mathbf{x},\mathbf{y}\in\mathbb{R}_+^N$,  $\|\mathcal{A}_{\mathcal{E}}^\mathrm{(ReM\text{-}Lips)}(\mathbf{x})-\mathcal{A}_{\mathcal{E}}^\mathrm{(ReM\text{-}Lips)}(\mathbf{y})\|_2 \leq \|\mathbf{x}-\mathbf{y}\|_2 + \|(\mathcal{E}(\mathbf{x}))_+-(\mathcal{E}(\mathbf{y}))_+\|_2 \leq (1+\operatorname{Lip}(\mathcal{E}))\,\|\mathbf{x}-\mathbf{y}\|_2$, and hence $\operatorname{Lip}(\mathcal{D}_{\mathcal{E}}^{(\mathrm{ReM\text{-}Lips)}})\le 1+\operatorname{Lip}(\mathcal{E})$. Comparing these bounds, we have $\operatorname{Lip}(\mathcal{D}_{\mathcal{E}}^{(\mathrm{ReM\text{-}Lips)}})\leq1+\operatorname{Lip}(\mathcal{E})\leq\sqrt{2}\,\sqrt{(\operatorname{Lip}(\mathcal{E}))^2+1}$, i.e., the bound in Proposition~\ref{prop:bound} is loose.

\section{Proof of Proposition \ref{prop:lip_equal}}
\label{app:proof_prop_lip_equal}

\begin{proof}
We first prove that $\operatorname{Lip}(\mathcal{D}_{\!\mathcal{A}})\le \operatorname{Lip}(\mathcal{A})$. In Eq.~\eqref{eq:cond1}, we may set $L_1=\operatorname{Lip}(\mathcal{A})$. Moreover, as given in the last sentence of the proof of Theorem~\ref{prop:zero_preserving}, Eq.~\eqref{eq:cond2} also holds with $L_2=L_1$. The argument in the proof of Theorem~\ref{thm:lipsam} then gives $\operatorname{Lip}(\mathcal{D}_{\!\mathcal{A}})\le\max(L_1,L_2)=L_1=\operatorname{Lip}(\mathcal{A})$.

Conversely, we show that $\operatorname{Lip}(\mathcal{A})\le\operatorname{Lip}(\mathcal{D}_{\!\mathcal{A}})$.
$\mathcal{D}_{\!\mathcal{A}}$ coincides with $\mathcal{A}$ on $\mathbb{R}_+^N\subset\mathbb{C}^N$. Hence, for any distinct $\mathbf{x},\mathbf{y}\in\mathbb{R}_+^N\subset\mathbb{C}^N$, we have $\mathcal{D}_{\!\mathcal{A}}(\mathbf{x})=\mathcal{A}(\mathbf{x})$ and $\mathcal{D}_{\!\mathcal{A}}(\mathbf{y})=\mathcal{A}(\mathbf{y})$, and therefore $\|\mathcal{A}(\mathbf{x})-\mathcal{A}(\mathbf{y})\|_2\le\operatorname{Lip}(\mathcal{D}_{\!\mathcal{A}})\|\mathbf{x}-\mathbf{y}\|_2$,
meaning that $\operatorname{Lip}(\mathcal{A})\le\operatorname{Lip}(\mathcal{D}_{\!\mathcal{A}})$ holds.
Combining the two inequalities gives $\operatorname{Lip}(\mathcal{A})=\operatorname{Lip}(\mathcal{D}_{\!\mathcal{A}})$.
\end{proof}

\section{Proof of Theorem \ref{thm:worst_case_lipschitz_bound}}
\label{proof:thm_wb}
To show Theorem \ref{thm:worst_case_lipschitz_bound}, we provide the following two lemmas.

\begin{lem}
\label{lem:lips_bound_diag_sup}
Let $\mathbf{D}_1=\mathrm{diag}(a_1,\dots,a_N)\in\mathbb{R}^{N\times N}$,
$\mathbf{D}_2=\mathrm{diag}(b_1,\dots,b_N)\in\mathbb{R}^{N\times N}$ and $L\ge 0$.
Then
\begin{align*}
\sup_{\|\mathbf{J}\|_{\mathrm{op}}\le L}\ \|\mathbf{D}_1 \mathbf{J}+\mathbf{D}_2\|_{\mathrm{op}}
=
\sup_{\|\mathbf{s}\|_2\le L}\sup_{\|\mathbf{t}\|_2=1}\|\mathbf{D}_1 \mathbf{s}+\mathbf{D}_2 \mathbf{t}\|_2,
\end{align*}
where $\mathbf{J}\in\mathbb{R}^{N\times N}$ and $\mathbf{t},\mathbf{s}\in\mathbb{R}^N$.
\end{lem}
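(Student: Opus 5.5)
The plan is to prove the two inequalities separately, starting from the variational characterization of the operator norm: $\|\mathbf{M}\|_{\mathrm{op}}=\sup_{\|\mathbf{t}\|_2=1}\|\mathbf{M}\mathbf{t}\|_2$. The left-hand side then equals
\begin{equation*}
\sup_{\|\mathbf{J}\|_{\mathrm{op}}\le L}\ \sup_{\|\mathbf{t}\|_2=1}\|\mathbf{D}_1\mathbf{J}\mathbf{t}+\mathbf{D}_2\mathbf{t}\|_2 .
\end{equation*}
The idea is that, for a fixed unit vector $\mathbf{t}$, the vector $\mathbf{J}\mathbf{t}$ ranges exactly over the closed ball of radius $L$ as $\mathbf{J}$ ranges over matrices with $\|\mathbf{J}\|_{\mathrm{op}}\le L$. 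Once this is established, exchanging the order of the two suprema gives the claim.

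\textbf{Step 1 (``$\le$'').} Fix $\mathbf{J}$ with $\|\mathbf{J}\|_{\mathrm{op}}\le L$ and a unit vector $\mathbf{t}$. Set $\mathbf{s}=\mathbf{J}\mathbf{t}$. Then $\|\mathbf{s}\|_2\le\|\mathbf{J}\|_{\mathrm{op}}\|\mathbf{t}\|_2\le L$, so
\begin{equation*}
\|\mathbf{D}_1\mathbf{J}\mathbf{t}+\mathbf{D}_2\mathbf{t}\|_2=\|\mathbf{D}_1\mathbf{s}+\mathbf{D}_2\mathbf{t}\|_2 .
\end{equation*}
The right-hand side is bounded by the right-hand side of the lemma. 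Taking suprema over $\mathbf{t}$ and then over $\mathbf{J}$ yields the inequality.

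\textbf{Step 2 (``$\ge$'').} Given any $\mathbf{s}$ with $\|\mathbf{s}\|_2\le L$ and a unit vector $\mathbf{t}$, I will construct the rank-one matrix $\mathbf{J}=\mathbf{s}\mathbf{t}^{\mathsf{T}}$. Its operator norm is $\|\mathbf{s}\|_2\|\mathbf{t}\|_2\le L$, and it satisfies $\mathbf{J}\mathbf{t}=\mathbf{s}$. Therefore
\begin{equation*}
\|\mathbf{D}_1\mathbf{s}+\mathbf{D}_2\mathbf{t}\|_2=\|(\mathbf{D}_1\mathbf{J}+\mathbf{D}_2)\mathbf{t}\|_2\le\|\mathbf{D}_1\mathbf{J}+\mathbf{D}_2\|_{\mathrm{op}} ,
\end{equation*}
and the right-hand side is at most the left-hand side of the lemma. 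Taking suprema over $\mathbf{s}$ and $\mathbf{t}$ gives the reverse inequality.

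No real obstacle is expected. The only point needing care is the rank-one construction in Step 2, specifically checking that $\|\mathbf{s}\mathbf{t}^{\mathsf{T}}\|_{\mathrm{op}}=\|\mathbf{s}\|_2$ when $\|\mathbf{t}\|_2=1$, which follows from Cauchy--Schwarz. The diagonal structure of $\mathbf{D}_1$ and $\mathbf{D}_2$ plays no role here; it becomes relevant only in the subsequent lemma reducing the problem to $N=2$.
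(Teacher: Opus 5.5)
Your proposal is correct and follows essentially the same route as the paper. For ``$\le$'' the paper fixes $\mathbf{t}$ as a maximizing right singular vector of $\mathbf{D}_1\mathbf{J}+\mathbf{D}_2$ and sets $\mathbf{s}=\mathbf{J}\mathbf{t}$, whereas you take the supremum over all unit $\mathbf{t}$, which is equivalent; for ``$\ge$'' you use the same rank-one matrix $\mathbf{s}\mathbf{t}^{\mathsf{T}}$.
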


\begin{proof}
Fix any $\mathbf{J}\in\mathbb{R}^{N\times N}$ such that $\|\mathbf{J}\|_{\mathrm{op}}\le L$, and define
$\mathbf{A}=\mathbf{D}_1\mathbf{J}+\mathbf{D}_2$.
By the definition of the operator norm, there exists a unit vector $\mathbf{t}^\star\in\mathbb{R}^N$ (a right singular vector of $\mathbf{A}$) such that
$\|\mathbf{A}\|_{\mathrm{op}}=\|\mathbf{A}\mathbf{t}^\star\|_2$.
Let $\mathbf{s}^\star=\mathbf{J}\mathbf{t}^\star$. Then
$\|\mathbf{s}^\star\|_2=\|\mathbf{J}\mathbf{t}^\star\|_2\le \|\mathbf{J}\|_{\mathrm{op}}\|\mathbf{t}^\star\|_2\le L$.
Therefore, $\|\mathbf{A}\|_{\mathrm{op}}
=\|\mathbf{A}\mathbf{t}^\star\|_2
=\|\mathbf{D}_1(\mathbf{J}\mathbf{t}^\star)+\mathbf{D}_2\mathbf{t}^\star\|_2 =\|\mathbf{D}_1\mathbf{s}^\star+\mathbf{D}_2\mathbf{t}^\star\|_2
\leq \sup_{\|\mathbf{s}\|_2\leq L}\sup_{\|\mathbf{t}\|_2=1}\|\mathbf{D}_1\mathbf{s}+\mathbf{D}_2\mathbf{t}\|_2$.
Taking the supremum over all $\mathbf{J}$ with $\|\mathbf{J}\|_{\mathrm{op}}\le L$ yields
$\sup_{\|\mathbf{J}\|_{\mathrm{op}}\le L}\ \|\mathbf{D}_1\mathbf{J}+\mathbf{D}_2\|_{\mathrm{op}}
\le
\sup_{\|\mathbf{s}\|_2\leq L}\sup_{\|\mathbf{t}\|_2=1}\|\mathbf{D}_1\mathbf{s}+\mathbf{D}_2\mathbf{t}\|_2$.

Conversely, fix any $\mathbf{s},\mathbf{t}\in\mathbb{R}^N$ such that $\|\mathbf{s}\|_2\leq L$ and $\|\mathbf{t}\|_2=1$, and define
$\widetilde{\mathbf{J}}=\mathbf{s}\mathbf{t}^\mathsf{T}$.
Then, $\widetilde{\mathbf{J}}\mathbf{t}=\mathbf{s}$ and $\|\widetilde{\mathbf{J}}\|_{\mathrm{op}}=\|\mathbf{s}\mathbf{t}^\mathsf{T}\|_{\mathrm{op}}\leq L$.
Hence, $\|\mathbf{D}_1\widetilde{\mathbf{J}}+\mathbf{D}_2\|_{\mathrm{op}}
\ge \|(\mathbf{D}_1\widetilde{\mathbf{J}}+\mathbf{D}_2)\mathbf{t}\|_2
= \|\mathbf{D}_1\mathbf{s}+\mathbf{D}_2\mathbf{t}\|_2$.
Taking the supremum over $\mathbf{s}$ and $\mathbf{t}$ gives
$\sup_{\|\mathbf{J}\|_{\mathrm{op}}\le L}\|\mathbf{D}_1\mathbf{J}+\mathbf{D}_2\|_{\mathrm{op}}
\ge
\sup_{\|\mathbf{s}\|_2\le L}\sup_{\|\mathbf{t}\|_2=1}\|\mathbf{D}_1\mathbf{s}+\mathbf{D}_2\mathbf{t}\|_2$.
\end{proof}

\begin{lem}
\label{lem:lips_bound_diag}
Consider the following optimization problem:
\begin{align}
\max_{\mathbf{s},\mathbf{t}\in\mathbb{R}^N}
\|\mathbf{D}_1\mathbf{s}+\mathbf{D}_2\mathbf{t}\|_2^2
\quad\text{\rm{subject to}}\quad
\|\mathbf{s}\|_2=\|\mathbf{t}\|_2=1,
\label{eq:opt_prob}
\end{align}
where $\mathbf{D}_1=\mathrm{diag}(a_1,\dots,a_N)\in\mathbb{R}^{N\times N}$ and
$\mathbf{D}_2=\mathrm{diag}(b_1,\dots,b_N)\in\mathbb{R}^{N\times N}$.
Then there exists a solution $(\mathbf{s}^\star,\mathbf{t}^\star)$ with $(i,j)\in[N]^2$ such that
\begin{align}
(s_k^\star,t_k^\star)=(0,0)
\text{ for all }k\notin\{i,j\}.
\label{eq:twoelement}
\end{align}
\end{lem}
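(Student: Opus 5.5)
The plan is to rewrite the objective in a form that depends only on the pairs $(s_k,t_k)$, and then to reduce the problem through Lagrange optimality conditions. For each $k$, set $p_k=s_k^2$ and $q_k=t_k^2$, and write the cross term with a sign $\sigma_k=\operatorname{sign}(s_kt_k)$. The objective is
\begin{equation*}
\sum_{k=1}^N \bigl(a_k^2 s_k^2 + 2a_kb_k s_kt_k + b_k^2 t_k^2\bigr).
\end{equation*}
At an optimum we may take $a_kb_ks_kt_k\ge 0$ for every $k$, since flipping the sign of $s_k$ changes neither the constraint nor the other terms. The objective then becomes $f(\mathbf p,\mathbf q)=\sum_k \bigl(a_k^2p_k+2|a_kb_k|\sqrt{p_kq_k}+b_k^2q_k\bigr)$, maximized over the product of two simplices $\{\mathbf p\ge0,\sum p_k=1\}\times\{\mathbf q\ge0,\sum q_k=1\}$. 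The feasible set is compact and $f$ is continuous, so a maximizer exists.

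\textbf{Step 1 (concavity).} The function $f$ is concave, because $\sqrt{pq}$ is concave on $\mathbb R_+^2$ (it is the geometric mean) and the remaining terms are linear. Hence the set of maximizers is a convex compact face-like set, and we only need to exhibit one maximizer supported on at most two indices in $\mathbf p$ and $\mathbf q$ jointly.

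\textbf{Step 2 (support reduction).} Fix an optimal $\mathbf q^\star$. The function $\mathbf p\mapsto f(\mathbf p,\mathbf q^\star)$ is concave but not linear, so vertex arguments do not apply directly. I would instead use the KKT conditions of the original sphere-constrained problem. Stationarity gives $\mathbf D_1(\mathbf D_1\mathbf s+\mathbf D_2\mathbf t)=\lambda\mathbf s$ and $\mathbf D_2(\mathbf D_1\mathbf s+\mathbf D_2\mathbf t)=\mu\mathbf t$. Componentwise, for each $k$ the pair $(s_k,t_k)$ satisfies
\begin{equation*}
\begin{pmatrix} a_k^2-\lambda & a_kb_k\\ a_kb_k & b_k^2-\mu\end{pmatrix}\begin{pmatrix}s_k\\t_k\end{pmatrix}=\mathbf 0 .
\end{equation*}
So for every active index $k$ (one with $(s_k,t_k)\ne(0,0)$), the determinant $(a_k^2-\lambda)(b_k^2-\mu)-a_k^2b_k^2$ vanishes and $(s_k,t_k)$ lies in the kernel.

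\textbf{Step 3 (merging).} Let $r_k=a_ks_k+b_kt_k$. If two active indices $k,l$ lie in the same kernel class, meaning they share the same ratio $s_k:t_k$, then mass can be transferred between them along the line $(s_k^2,t_k^2)\propto(s_l^2,t_l^2)$ without decreasing the objective. The objective is linear along this direction at the stationary point, because both indices have the same local Rayleigh quotient $\lambda$ and $\mu$. Pushing mass to an endpoint strictly reduces the support.

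\textbf{Step 4 (counting classes).} It remains to show that at most two distinct ratio classes can coexist at an optimum. I expect this to be the main obstacle. The plan is to consider the $2\times2$ reduced problem: take the aggregated vectors $\mathbf s$ and $\mathbf t$ restricted to any three active indices, and argue via a second-order condition. The restricted Hessian on the tangent space of the two spheres has dimension (number of active indices) $-2$ in each block. With three or more active indices, one finds a tangent direction that moves mass between classes, and along it the second variation is nonnegative; this is the step to verify. That contradicts strict optimality unless the objective stays flat, and in the flat case we move to a boundary point with smaller support.

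Iterating Steps 3 and 4 yields a maximizer with at most two active indices $i,j$, which is exactly Eq.~\eqref{eq:twoelement}.
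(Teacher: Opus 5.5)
\textbf{The gap is in Step~4.} All of the reduction happens there, and as framed it cannot be completed. Its stated target is false. Take $(a_1,a_2,a_3)=(\sqrt2,0,1)$ and $(b_1,b_2,b_3)=(0,\sqrt2,1)$. The objective is then $2s_1^2+2t_2^2+(s_3+t_3)^2\le 2(s_1^2+s_3^2)+2(t_2^2+t_3^2)\le 4$. Every pair $\mathbf s=(\cos\theta,0,\sin\theta)$, $\mathbf t=(0,\cos\theta,\sin\theta)$ with $0<\theta<\pi/2$ attains $4$, with three active indices in three different kernel classes.

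So the ``flat case'' is not an exception you can dispatch at the end; it is the situation you must handle, and second-order information cannot do it. At a maximizer the second derivative along any feasible curve is already nonpositive. Proving it nonnegative only makes it zero, which contradicts nothing, since maximizers need not be strict (as the example shows). It also does not produce any curve of constant objective value ending at a point of smaller support. What you need is exact linearity of the objective along a path to the boundary, not an infinitesimal statement.

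\textbf{The missing idea.} That linearity is already present in your Step~3, and it works across classes, not only within one. Its source is not a shared Rayleigh quotient but degree-one homogeneity of each summand $a_k^2p_k+2|a_kb_k|\sqrt{p_kq_k}+b_k^2q_k$ along a fixed ratio. Freeze the direction of every active pair of an optimum $(\tilde{\mathbf s},\tilde{\mathbf t})$ and set $(s_k,t_k)=\sqrt{\rho_k}\,(\tilde s_k,\tilde t_k)$ with $\rho_k\ge0$. The constraints become $\sum_k\rho_k\tilde s_k^2=1$ and $\sum_k\rho_k\tilde t_k^2=1$, and the objective becomes $\sum_k\rho_k(a_k\tilde s_k+b_k\tilde t_k)^2$, all linear in $\boldsymbol\rho$.

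The feasible set is a nonempty polytope defined by two equality constraints. It is bounded because $\tilde s_k^2+\tilde t_k^2>0$ for active $k$ and the two constraints add up to $\sum_k\rho_k(\tilde s_k^2+\tilde t_k^2)=2$. Hence this LP has an optimal vertex with at most two nonzero entries. Its value is at least the value at $\boldsymbol\rho=\mathbf 1$, which is the global maximum, and at most the global maximum, since its feasible points are feasible for the original problem. So this vertex gives the required $(\mathbf s^\star,\mathbf t^\star)$.

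This is exactly the paper's route. It parametrizes $s_n=\sqrt{c_nr_n}$, $t_n=\sqrt{d_nr_n}$ with the directions supplied by stationarity. It then uses $\lambda c_n+\mu d_n=1$ --- your vanishing determinant divided by $\lambda\mu$ --- to show that every feasible point of that LP has objective value $\lambda+\mu$, so any basic feasible solution will do. With the homogeneity argument above you do not even need the KKT conditions or the concavity from Step~1.
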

\begin{proof}
If either $\mathbf{D}_1$ or $\mathbf{D}_2$ is zero, the result is trivial.
Hence, we assume $\mathbf{D}_1$, $\mathbf{D}_2\neq\mathbf{O}$. 
Since Eq.~\eqref{eq:opt_prob} is a maximization problem,
we may assume that $a_n,b_n,s_n,t_n\geq 0$ for all $n\in[N]$.
Let $(\tilde{\mathbf{s}},\tilde{\mathbf{t}})$ be an optimal solution to Eq.~\eqref{eq:opt_prob} and let $\tilde{r}_n=(a_n\tilde{s}_n+b_n\tilde{t}_n)^2$ and define a set of indices as $I=\{n\in[N]\mid \tilde{r}_n\neq0\}$. 
The Lagrangian of Eq.~\eqref{eq:opt_prob} is
\begin{equation*}
\mathfrak{L}(\mathbf{s},\mathbf{t},\lambda,\mu)
=
\sum_{n=1}^N(a_ns_n+b_nt_n)^2
+\lambda(1-\|\mathbf{s}\|_2^2)
+\mu(1-\|\mathbf{t}\|_2^2).
\end{equation*}
The stationarity conditions, i.e., $\nabla_\mathbf{s}\mathfrak{L}(\tilde{\mathbf{s}},\tilde{\mathbf{t}},\lambda,\mu)=\mathbf{0}$ and $
\nabla_\mathbf{t}\mathfrak{L}(\tilde{\mathbf{s}},\tilde{\mathbf{t}},\lambda,\mu)=\mathbf{0}$, give $a_n \sqrt{\tilde{r}_n}=\lambda \tilde{s}_n$, $b_n \sqrt{\tilde{r}_n}=\mu \tilde{t}_n$ for all $n\in[N]$. 
Since $\mathbf{D}_1,\mathbf{D}_2\neq\mathbf{O}$ and
$(\tilde{\mathbf{s}},\tilde{\mathbf{t}})$ is optimal, we have
$\mathbf{D}_1\tilde{\mathbf{s}}\neq\mathbf{0}$ and
$\mathbf{D}_2\tilde{\mathbf{t}}\neq\mathbf{0}$, and hence
$\lambda,\mu>0$.
Therefore, $\tilde{r}_n=0$ implies $\tilde{s}_n=\tilde{t}_n=0$.
Let $c_n=a_n^2/\lambda^2$ and $d_n=b_n^2/\mu^2$.
Substituting $\tilde{s}_n=\sqrt{c_n\tilde{r}_n}$ and $\tilde{t}_n=\sqrt{d_n\tilde{r}_n}$ into
$\sqrt{\tilde{r}_n}=a_n\tilde{s}_n+b_n\tilde{t}_n$ gives $\lambda c_n+\mu d_n=1$ for all $n\in I$. 
Let us consider the following linear programming (LP):
\begin{align}
\max_{\mathbf{r}\in\mathbb{R}_+^{|I|}}
\sum_{n\in I}r_n
\;\text{subject to}\;
\mathbf{P}
\mathbf{r}
=\mathbf{1},\;
\mathbf{P}=
\begin{pmatrix}
(c_n)_{n\in I}\\
(d_n)_{n\in I}
\end{pmatrix},
\label{eq:theLP}
\end{align}
where $\mathbf{P}\in\mathbb{R}_+^{2\times|I|}$.
Then $(\tilde{r}_n)_{n\in I}\in\mathbb{R}_+^{|I|}$ is feasible to Eq.~\eqref{eq:theLP}, since the constraints in Eq.~\eqref{eq:opt_prob} give $\sum_{n\in I}\tilde{s}_n^2=\sum_{n\in I}c_n\tilde{r}_n=1$ and $\sum_{n\in I}\tilde{t}_n^2=\sum_{n\in I}d_n\tilde{r}_n=1$. Its objective value also coincides with that of Eq.~\eqref{eq:opt_prob}, since $\sum_{n\in I}\tilde{r}_n=\|\mathbf{D}_1\tilde{\mathbf{s}}+\mathbf{D}_2\tilde{\mathbf{t}}\|_2^2$.
Moreover, every feasible point for Eq.~\eqref{eq:theLP} is optimal, i.e., all feasible points have the same objective value, $\lambda+\mu$. 
Indeed, for any $\mathbf{r}\in\mathbb{R}_+^{|I|}$ such that $\mathbf{P}\mathbf{r}=\mathbf{1}$, we have $\sum_{n\in I}r_n=\sum_{n\in I}(\lambda c_n+\mu d_n)r_n=\lambda\sum_{n\in I}c_nr_n+\mu\sum_{n\in I}d_nr_n=\lambda+\mu$. Since the LP admits a basic feasible solution and $\operatorname{rank}(\mathbf{P})\leq2$, there exists an optimal solution $\mathbf{r}^\star$ with at most two nonzero elements. A solution $(\mathbf{s}^\star,\mathbf{t}^\star)$ to Eq.~\eqref{eq:opt_prob} can then be constructed from $\mathbf{r}^\star$ by setting $(s_n^\star,t_n^\star)=(\sqrt{c_n r_n^\star},\sqrt{d_nr_n^\star})$ for $n\in I$, and $(s_n^\star,t_n^\star)=(0,0)$ otherwise, which satisfies Eq.~\eqref{eq:twoelement}.
\end{proof}

Using these lemmas, we show Theorem \ref{thm:worst_case_lipschitz_bound}.

\begin{proof}[Proof of Theorem \ref{thm:worst_case_lipschitz_bound}]
Let $\Lambda_\psi = \{(a,b)\in\mathbb{R}^2 \,|\,
\exists (u,v)\in\mathbb{R}\times\mathbb{R}_+\text{ s.t. } a=\partial_1\psi(u,v), b=\partial_2\psi(u,v)
\}$ be the set of all possible pairs of the partial derivatives.
For any $N\geq 1$ and $\mathbf{x}\in\mathbb{R}_+^N$,
there exist $(a_n,b_n)\in\Lambda_\psi$ for all $n\in[N]$ such that
$\mathbf{J}_{\mathcal{A}_{(\mathcal{L},\psi)}}(\mathbf{x})=\mathbf{D}_1\,\mathbf{J}_\mathcal{L}(\mathbf{x})+\mathbf{D}_2$, where $\mathcal{A}_{(\mathcal{L},\psi)}(\mathbf{x}) = (\psi((\mathcal{L}(\mathbf{x}))_n, x_n))_{n=1}^N$,
$\mathbf{D}_1=\mathrm{diag}(a_1,\dots,a_N)$,
and $\mathbf{D}_2=\mathrm{diag}(b_1,\dots,b_N)$.
When $N\geq 2$, Lemmas~\ref{lem:lips_bound_diag_sup} and \ref{lem:lips_bound_diag} imply
\begin{align*}
&B_{(L,\psi)}^{(N)}=\sup_{\mathcal{L}:\mathbb{R}_+^N\to\mathbb{R}^N:\operatorname{Lip}(\mathcal{L})\le L}\operatorname{Lip}(\mathcal{A}_{(\mathcal{L},\psi)})\nonumber \\
&=
\sup_{((a_n,b_n))_{n=1}^N\in\Lambda_\psi^N}\ 
\sup_{\|\mathbf{J}\|_{\mathrm{op}}\le L}\ \|\mathbf{D}_1\mathbf{J}+\mathbf{D}_2\|_{\mathrm{op}} \nonumber \\
&=\sup_{((a_n,b_n))_{n=1}^N\in\Lambda_\psi^N}    \sup_{\scriptsize\begin{array}{c}
                \mathbf{t}\in\mathbb{R}^N\\
                \|\mathbf{t}\|_2=1
                \end{array}
    }
        \sup_{\scriptsize\begin{array}{c}
                \mathbf{s}\in\mathbb{R}^N\\
                \|\mathbf{s}\|_2\leq 1
                \end{array}
    } \|(L\mathbf{D}_1)\mathbf{s}+\mathbf{D}_2\mathbf{t}\|_2 \nonumber \\
&\overset{\star}{=} \sup_{((\hat{a}_n,\hat{b}_n))_{n=1}^2\in\Lambda_\psi^2}\ 
    \sup_{\scriptsize\begin{array}{c}
                \hat{\mathbf{t}}\in\mathbb{R}^2\\
                \|\hat{\mathbf{t}}\|_2=1
                \end{array}
    }
    \sup_{\scriptsize\begin{array}{c}
                \hat{\mathbf{s}}\in\mathbb{R}^2\\
                \|\hat{\mathbf{s}}\|_2\leq 1
                \end{array}
    }
    \|(L\hat{\mathbf{D}}_1)\hat{\mathbf{s}}+\hat{\mathbf{D}}_2\hat{\mathbf{t}}\|_2, 
    \\
    &= \sup_{\mathcal{L}:\mathbb{R}_+^2\to\mathbb{R}^2:\operatorname{Lip}(\mathcal{L})\le L}\operatorname{Lip}(\mathcal{A}_{(\mathcal{L},\psi)})=B_{(L,\psi)}^{(2)}.
\end{align*}
The equality marked by $\star$ follows by Lemma~\ref{lem:lips_bound_diag}.
Namely, the supremum in the upper line is attained by some $\mathbf{s}^\star,\mathbf{t}^\star\in\mathbb{R}^N$ for which there exists $(i,j)\in[N]^2$ such that $(s^\star_k,t^\star_k)=(0,0)$ for all $k\notin\{i,j\}$. 
Therefore, retaining only the $i$th and $j$th components and the corresponding diagonal entries of $\mathbf{D}_1$ and $\mathbf{D}_2$, denoted by hats, preserves the objective value while reducing the dimensions of $\mathbf{s}$ and $\mathbf{t}$ to $N=2$.
For $N=1$, any $L$-Lipschitz-continuous mapping $\mathcal{L}_1:\mathbb{R}_+\to\mathbb{R}$ can be extended to $\mathcal{L}_2:\mathbb{R}_+^2\to\mathbb{R}^2$ by $\mathcal{L}_2(x,y)=(\mathcal{L}_1(x),\mathcal{L}_1(y))$, for which $\operatorname{Lip}(\mathcal{A}_{(\mathcal{L}_2,\psi)})=\operatorname{Lip}(\mathcal{A}_{(\mathcal{L}_1,\psi)})$. Hence, $B_{(L,\psi)}^{(1)}\leq B_{(L,\psi)}^{(2)}$.
Thus, Eq.~\eqref{eq:reduce2} holds.
\end{proof}

\section{Proof of Proposition \ref{thm:lips_bound_se_re}}
\label{app:proof_thm_lips_bound_se_re}

\begin{proof}
$\operatorname{Lip}(\mathcal{D}_{\mathcal{E}}^\mathrm{(Lips)}) \leq \sqrt{(\operatorname{Lip}(\mathcal{E}))^2+1}$ follows immediately from Proposition~\ref{prop:bound}, whereas $\operatorname{Lip}(\mathcal{D}_{\mathcal{E}}^\mathrm{(ReM\text{-}Lips)}) \leq \operatorname{Lip}(\mathcal{E})+1$ follows from the triangle inequality, as shown in Appendix~\ref{sec:loose}.

We show the tightness of these bounds by providing examples that achieve equality for $N=2$.
Recall that the derivative of ReLU (i.e., $(\cdot)_+$) is $1$ for positive inputs.
First, assume that $L>0$.
For LipsAM-E, let $\mathcal{E}(x_1,x_2)=(Lx_2,Lx_1)$ and $\mathcal{A}(\mathbf{x})=(\min(\mathcal{E}(\mathbf{x}),\mathbf{x}))_+$. Then $\operatorname{Lip}(\mathcal{E})=L$, and for any $\mathbf{x}^\star=(1,x_2^\star)\in\mathbb{R}_+^2$ with $0<x_2^\star<\min(1/L,L)$, we obtain
\begin{equation}
\left\|\mathbf{J}_{\mathcal{A}}(\mathbf{x}^\star)\right\|_{\mathrm{op}}
=
\left\|
\begin{bmatrix}
0 & L\\
0 & 1
\end{bmatrix}
\right\|_{\mathrm{op}}
=
\sqrt{L^2+1}.
\end{equation}
For ReM-LipsAM-E, let $\mathcal{E}(x_1,x_2)=(b-Lx_1,b-Lx_2)$ for some $b>0$
and $\mathcal{A}(\mathbf{x})=(\mathbf{x}-(\mathcal{E}(\mathbf{x}))_+)_+$.
Then $\operatorname{Lip}(\mathcal{E})=L$, and for any $\mathbf{x}^\star=(x^\star,x^\star)\in\mathbb{R}_+^2$ satisfying $b/(L+1)<x^\star<b/L$,
we obtain 
\begin{equation}
\left\|
\mathbf{J}_{\mathcal{A}}
(\mathbf{x}^\star)
\right\|_{\mathrm{op}}
=
\left\|
\begin{bmatrix}
L+1 & 0\\
0 & L+1
\end{bmatrix}
\right\|_{\mathrm{op}}
=
L+1.
\end{equation}
The case $L=0$ follows similarly by using a positive constant mapping, e.g., $\mathcal{E}(x_1,x_2)=(b,b)$ for some $b>0$.
\end{proof}

\bibliographystyle{IEEEtran}
\bibliography{LipsAM.bib}
\end{document}